\providecommand{\algorithmname}{Algorithm}
\providecommand{\algorithmname}{Algorithm}
\newcommand{\beq}{\begin{equation}}
\newcommand{\eeq}{\end{equation}}
\def\bu{\mbox{\boldmath $u$}}
\def\by{\mbox{\boldmath $y$}}
\def\bs{\mbox{\boldmath $s$}}
\def\by{\mbox{\boldmath $y$}}
\def\bY{\mbox{\boldmath $Y$}}
\def\bmu{\mbox{\boldmath $\mu$}}
\def\bP{\mbox{\boldmath $P$}}
\def\mSigma{\mbox{$\mathbf{\Sigma}$}}
\def\my{\mbox{$\mathbf{y}$}}
\def\me{\mbox{$\mathbf{e}$}}
\def\mx{\mbox{$\mathbf{x}$}}
\def\mb{\mbox{$\mathbf{b}$}}
\def\mA{\mbox{$\mathbf{A}$}}
\def\mB{\mbox{$\mathbf{B}$}}
\def\mC{\mbox{$\mathbf{C}$}}
\def\mLambda{\mbox{$\mathbf{\Lambda}$}}
\def\mlambda{\mbox{$\mathbf{\lambda}$}}
\def\mPhi{\mbox{$\mathbf{\Phi}$}}
\def\mX{\mbox{$\mathbf{X}$}}
\def\mD{\mbox{$\mathbf{D}$}}
\def\mE{\mbox{$\mathbf{E}$}}
\def\mF{\mbox{$\mathbf{F}$}}
\def\mH{\mbox{$\mathbf{H}$}}
\def\mI{\mbox{$\mathbf{I}$}}
\def\mL{\mbox{$\mathbf{L}$}}
\def\mY{\mbox{$\mathbf{Y}$}}
\def\mP{\mbox{$\mathbf{P}$}}
\def\mQ{\mbox{$\mathbf{Q}$}}
\def\mZ{\mbox{$\mathbf{Z}$}}
\def\mS{\mbox{$\mathbf{S}$}}
\def\mU{\mbox{$\mathbf{U}$}}
\def\mV{\mbox{$\mathbf{V}$}}
\def\mz{\mbox{$\mathbf{z}$}}
\def\mv{\mbox{$\mathbf{v}$}}
\def\mW{\mbox{$\mathbf{W}$}}
\def\mM{\mbox{$\mathbf{M}$}}
\newcommand{\ds}{\displaystyle}
\providecommand{\algorithmname}{Algorithm}
\newtheorem{proposition}{Proposition}
\newenvironment{proof}[1][Proof]{\noindent \textbf{#1.} }{\qedsymbol}
\newcommand{\qedsymbol}{\hspace{\fill}\rule{1.5ex}{1.5ex}}
\begin{document}
\title{ Graph topology inference based on \\ sparsifying transform learning}
\author{Stefania Sardellitti,~\IEEEmembership{Member,~IEEE}, Sergio Barbarossa,~\IEEEmembership{Fellow,~IEEE}, and Paolo Di Lorenzo,~\IEEEmembership{Member,~IEEE} \\
\thanks{The authors are with the Department of Information
Engineering, Electronics, and Telecommunications,
Sapienza University of Rome, Via Eudossiana 18, 00184,
Rome, Italy. E-mails: \{stefania.sardellitti; sergio.barbarossa; paolo.dilorenzo\}@uniroma1.it.
This work has been
supported by
H2020 EUJ Project 5G-MiEdge, Nr. 723171.
This paper was presented in part at the  IEEE Global Conference on Signal and Information Processing, GlobalSIP,  USA, Dec. 7-9, 2016.
}}
\vspace{-0.4cm}

\maketitle
\begin{abstract}
Graph-based representations play a key role in machine learning. The fundamental step in these representations is the association of a graph structure to a dataset. In this paper, we propose a method that aims at finding a block sparse representation of the graph signal leading to a modular graph whose Laplacian matrix admits the found dictionary as its eigenvectors. The role of sparsity here is to induce a band-limited representation or, equivalently, a modular structure of the graph.
The proposed strategy is composed of two optimization steps: i) learning an orthonormal sparsifying transform from the data; ii) recovering the Laplacian, and then topology, from the transform. The  first step is achieved through an iterative algorithm whose alternating intermediate solutions are expressed in closed form. The second step recovers the Laplacian matrix from the sparsifying transform through a convex optimization method. 
Numerical results corroborate the effectiveness of the proposed methods over both synthetic data and real brain data, used for inferring the brain functionality network through experiments conducted over patients affected by epilepsy. \end{abstract}
\begin{IEEEkeywords}
 Graph learning, graph signal processing, sparsifying transform.
\end{IEEEkeywords}
\section{Introduction}
\label{sec:intro}
In many machine learning applications, from brain functional analysis to gene regulatory networks or sensor networks, associating a graph-based representation to a dataset is a fundamental step to extract relevant information from the data, like detecting clusters or measuring node centrality. The graph underlying a given data-set can be real, as in the case of a sensor network, for example, or it can be just an abstract descriptor of pairwise similarities.
There is  a large amount of work aimed at learning the network topology from a set of observations \cite{Kolaczyk}. Typically, the graph topology reflects correlations among signals defined over its vertices. However, looking only at correlations may fail to capture the causality relations existing among the data. Alternative approaches built on partial correlation \cite{Kolaczyk}, or Gaussian graphical models \cite{Tibshirani}, \cite{Tenenbaum} have been deeply investigated. The analysis of signal defined over graphs, or Graph Signal Processing (GSP)  \cite{Shuman2013, Moura2014, Pesenson2008, Zhu2012}, has provided a further strong impulse to the discovery of new techniques for inferring the graph topology from a set of observations.
Some GSP-based approaches make assumptions about the graph by enforcing properties such as sparsity and/or smoothness of the signals
\cite{Kalofolias}, \cite{Hero}, \cite{Frossard_jour}.  Specifically, Kalofolias in \cite{Kalofolias}
 proposed  a Laplacian learning framework,  for smooth graph signals, by reformulating the problem as a weight $\ell_1$-norm minimization   with a log barrier penalty term on the node degrees.
 Recently, the problem of learning a sparse, unweighted, graph Laplacian from smooth graph signals has been modeled by Chepuri {\it et al.}   as a rank ordering problem, where the  Laplacian is expressed in terms of a sparse edge selection vector by assuming a priori knowledge of the graph sparsity level \cite{Hero}.
An alternative approach consists in defining joint properties between signals and underlying graph such that the signal representation is consistent with given
statistical priors on latent variables \cite{Frossard_jour}, \cite{Egilmez}. Under the assumption that the observed field is a Gaussian Markov Random Field (GMRF), whose precision matrix is the graph Laplacian, Dong \emph{et al.}  \cite{Frossard_jour} proposed a graph learning method that favors signal representations that are smooth and consistent with the underlying statistical prior model.
Recently, in \cite{Egilmez} Egilmez {\it et al.} proposed a general framework where graph learning is formulated as the estimation of different types of graph Laplacian matrices by adopting as minimization metric  the log-determinant Bregman divergence with an additional, sparsity promoting, $\ell_1$-regularization term.  In \cite{Egilmez} it is shown that such a problem corresponds to a maximum a posteriori parameter estimation of GMRFs whose precision matrices are graph Laplacians.
In case of directed graphs, Mei {\it et al.}  proposed algorithms for estimating the directed, weighted adjacency matrices describing the dependencies among time series generated by systems of interacting agents \cite{Mei2015}.

There are also some recent works that focus on learning the graph topology from  signals that diffuse over a graph \cite{Segarrajournal2016}, \cite{Pasdeloup},
\cite{Thanou}. Segarra \emph{et al.} developed efficient sparse graph recovery algorithms for identifying  the  graph shift operator (the adjacency matrix or the normalized graph Laplacian) given only the eigenvectors  of the shift operator  \cite{Segarrajournal2016}.
These bases are estimated from the eigenvectors of the sample covariance matrix of stationary graph signals resulting from diffusion dynamics on the graphs.
Then the optimization strategy minimizes the shift-operator $\ell_1$-norm by imposing complete or partial knowledge of the eigenvectors of the empirical covariance matrix. In \cite{Pasdeloup}, Pasdeloup {\it et al.} formulated the graph-inference problem in a way similar  to  \cite{Segarrajournal2016}, albeit with a different matrix selection strategy,  which aims at  recovering   a  matrix modeling a diffusion process without imposing the structure of a valid Laplacian to this matrix.
Recently, Thanou {\it et al.} proposed a graph  learning strategy based on the assumption that the graph signals are generated from heat diffusion processes starting from a few nodes of the graphs, by enforcing sparsity of the graph signals in a dictionary that is a concatenation of graph diffusion kernels at different scales \cite{Thanou}.
The problem of jointly estimating the dictionary matrix and the graph topology from data has been recently tackled by Yankelevsky {\it et al.}  by taking into account the underlying structure in both the signal and the manifold domains \cite{Elad_Yank}.
Specifically, two smoothness regularization terms are introduced to describe, respectively, the relations between the rows and the columns of the data matrix.
This leads to a dual graph regularized formulation aimed at finding, jointly, the dictionary and the sparse data matrix.
The   graph Laplacian is then estimated by imposing  smoothness of any signal represented over the estimated dictionary.
Given the observed graph signal, the estimation of the Laplacian eigenvectors is  conceptually equivalent to finding a  sparsifying transform \cite{Bresler13}.
A dictionary learning method for sparse coding incorporating a graph Laplacian regularization term, which enables the exploitation of data-similarity, was developed in \cite{BMVC2015_44}, while a dictionary learning method reinforcing group-graph structures was proposed in \cite{AAAI159476}.

One of the key properties of the spectral representation of signals defined over a graph is that the representation depends on the graph topology. The Graph Fourier Transform (GFT), for example, is the projection of a signal onto the space spanned by the eigenvectors of the graph Laplacian matrix. The main idea underlying our work is to associate a graph topology to the data in order to make the observed signals band-limited over the inferred graph or, equivalently, to make their spectral representation {\it sparse}. Finding this sparse representation of a signal on a graph is instrumental to finding graph topologies that are {\it modular}, i.e. graphs characterized by densely interconnected subsets (clusters) weakly interconnected with other  clusters. Modularity is in fact a structure that conveys important information. This is useful in all those applications, like clustering or classification, whose goal is to associate a label to all vertices belonging to the same class or cluster. If we look at the labels as a signal defined over the vertices of a graph, we have a  signal  constant within each cluster and varying arbitrarily across different clusters. This is a typical band-limited signal over a graph.  The proposed approach is composed of the following two main steps:
i) learn the GFT basis and the sparse signal representation  {\it jointly} from the observed dataset;
ii) infer the graph \textit{weighted} Laplacian, and then the graph topology, from the estimated Laplacian eigenvectors.
A nice feature of the proposed approach is that the first step is solved through an alternating optimization algorithm whose single steps are solved in closed form. Furthermore, the second step is cast as a convex problem, so that it can be efficiently solved.
With respect to previous works, we make no assumptions about the stationarity  of the observed signal and we do not assume any diffusion process taking place on the graph.

Interestingly, we find theoretical conditions relating the sparsity of the graph that can be inferred and the bandwidth (sparsity level) of the signal.

To assess the performance of the proposed approaches, we consider both synthetic and real data. In the first case, we know the ground-truth graph and then we use it as a benchmark. In such a case, we also compare our methods with similar approaches, under different data models (e.g., band-limitedness, multivariate Gaussian random variables described by an inverse Laplacian covariance matrix, etc.). Then, we applied our methods to infer the brain functionality graph from electrocorticography (ECoG) seizure data collected in an epilepsy study \cite{Kolaczyk}, \cite{Kramer}. In such a case, the brain network is not directly observable and must be inferred by measurable ECoG signals. This represents an important application for the recently developed, graph-based learning methods \cite{Giannakis},\cite{Ribeiro}.  By exploiting such methods, the  hope is to gain more insights about the unknown mechanisms underlying some
neurological diseases such as, for example, epileptic seizures.

The rest of the paper is organized as follows.
Section II introduces the graph signal models and the topology learning strategy. Section III describes the first step of our method designed to
learn jointly, a subset of the GFT basis and the sparse signal.  Two alternative algorithms to learn the Laplacian matrix from the estimated (incomplete) Fourier basis are then
presented in Section IV. Theoretical conditions on graph identifiability are discussed in Section V.  Numerical tests based on both synthetic and real data from an epilepsy study, are then presented in Section VI.


\section{Learning  topology from graph signals}
\subsection{Graph signals model}
We consider an undirected graph $\mathcal{G}=\{\mathcal{V},\mathcal{E}\}$
consisting of  a set of $N$ vertices (or nodes) $\mathcal{V}=\{1,\ldots, N\}$ along with a set of edges
$\mathcal{E}=\{a_{ij}\}_{i,j \in \mathcal{V}}$, such that  $a_{ij}>0$, if there is a  link between node $j$
and node $i$, or $a_{ij}=0$, otherwise. We denote with $|\mathcal{V}|$
the cardinality of $\mathcal{V}$, i.e. the number of elements of $\mathcal{V}$.
Let $\mA$ denote the $N\times N$ adjacency, symmetric matrix with entries the edge weights $a_{ij}$ for $i,j=1,\ldots,N$.
The (unnormalized) graph Laplacian is defined as $\mL:= \mD-\mA$, where the  degree matrix $\mD$ is a diagonal matrix
whose $i$th diagonal entry is $d_i= \sum_j a_{ij}$. Since $\mathcal{G}$ is an undirected graph,
the associated Laplacian matrix is symmetric and positive semidefinite and admits the eigendecomposition
$\mL= \mU \mLambda \mU^T$, where $\mU$ collects all the eigenvectors $\{\bu_i\}_{i=1}^{N}$ of $\mL$, whereas $\mLambda$ is a diagonal matrix
containing the eigenvalues of $\mL$.
A signal $\by$ on a graph $\mathcal{G}$ is defined as a mapping from the vertex set
to a complex vector of size $N=|\mathcal{V}|$, i.e. $\by: \mathcal{V}\rightarrow \mathbb{C}$.
For undirected graphs, the GFT $\hat{\by}$ of a graph signal $\by$ is
 defined as the projection of $\by$ onto the subspace spanned by the eigenvectors of the Laplacian matrix, see e.g. \cite{Shuman2013}, \cite{Pesenson2008}, \cite{Zhu2012}:
 \beq
 \hat{\by}=\mU^T \by.
  \eeq
A band-limited graph signal is a signal whose GFT is sparse. Let us denote these signals as
  \begin{equation}
\label{y=Vs}
\by=\mU \bs
\end{equation}
where $\bs$ is a sparse vector.
One of the main motivations for projecting the
signal $\by$ onto the subspace spanned by the eigenvectors of
$\mL$, is that these eigenvectors encode some of the
graph topological features \cite{Luxburg}, \cite{Chung1997}. In particular, the eigenvectors associated to the smallest eigenvalues of $\mL$
identify graph clusters  \cite{Luxburg}. Hence, a band-limited signal with support on the lowest indices is a signal that is smooth within clusters, while it can vary arbitrarily across clusters.
More formally, given a subset of indices $\mathcal{K}\subseteq \mathcal{V}$,
 we introduce the {\it band-limiting} operator $\mB_{\mathcal{K}}=\mU \mSigma_{\mathcal{K}} \mU^T$,
 where $\mSigma_{\mathcal{K}}$ is a diagonal matrix defined as
 $\mSigma_{\mathcal{K}}=\mbox{Diag}\{\mathbf{1}_{\mathcal{K}}\}$ and $\mathbf{1}_{\mathcal{K}}$ is the set indicator vector, whose $i$-th entry is equal to one if $i \in \mathcal{K}$, and zero otherwise.
 \\
 Let us assume that $\bs \in \mathbb{R}^N$ is a  $|\mathcal{K}|$-sparse vector, i.e.  $\parallel\bs\parallel_0 \leq K $, where the $l_0$-norm counts the number
of non-zeros in $\bs$ and $K=|\mathcal{K}|$.
 We say that  $\by$ is  a  $\mathcal{K}$-band-limited graph signal if
$\mB_{\mathcal{K}}\by=\by$, or,  equivalently, if $\by=\mU \bs$, where $\bs$ is a
$|\mathcal{K}|$-sparse vector \cite{Tsit_Barb_PDL}.
The observed signal $\by \in \mathbb{R}^N$  can be seen as  a linear combination of columns from a dictionary $\mU \in
\mathbb{R}^{N\times N}$.
Collecting $M$ vectors $\bs_i \in \mathbb{R}^N$
 in the matrix $\mS \in \mathbb{R}^{N\times M}$, we assume that the signals $\bs_i$ share a common support. This implies that $\mS$ is a block-sparse matrix
 having entire rows as zeros or nonzeros.  We define the set of $K$-block sparse signals \cite{Baraniuk}, \cite{Eldar} as
 \beq
 \begin{split}
 \mathcal{B}_K =\{ \mS=[\bs_1,\ldots, \bs_M] \in \mathbb{R}^{N\times M} \, | \,   \mS(i,:)=\mathbf{0},\; & \forall\, i \not\in  \mathcal{K}\subseteq \mathcal{V},\\ &  \, K=|\mathcal{K}| \}
 \end{split}
 \eeq
 where $\mS(i,:)$ denotes the $i$th row of $\mS$.
Hence, by introducing the  matrix $\mY \in \mathbb{R}^{N\times M}$    with columns the $M$ observed vectors $\by_i$,
  we get the compact form $\mY=\mU \mS$
  where    $\mS \in \mathcal{B}_K$.\\
  \subsection{Problem formulation}
  In this paper we propose  a  novel strategy to learn,  jointly, the orthonormal transform matrix $\mU$, the sparse graph signal $\mS$ and
  the underlying graph topology $\mL$. Although  the matrices $\mU$,
$\mS$ and $\mL$ are not uniquely identified, we will devise an efficient algorithm to  find optimal solutions that  minimize the estimation error.
Before formulating our problem, we  introduce the space of valid combinatorial Laplacians, i.e.
\begin{equation}
\label{L}
\mathcal{L}=\{  \mL \in \mathcal{S}_{+}^N \, |   \; \mL \mathbf{1}=\mathbf{0},\; L_{ij}=L_{ji} \leq 0, \forall i\neq j\}
\end{equation}
 where $\mathcal{S}_{+}^N$ is the set of  real, symmetric and positive semidefinite matrices.
In principle, our goal would be to find the solution of  the following optimization problem
\beq \nonumber \label{eq:P_tot}
\begin{split}
 & \underset{\begin{split} & \mL, \mU,\mLambda \in \mathbb{R}^{N \times N}\\ & \mS \in \mathbb{R}^{N \times M} \end{split}}{\hspace{-1cm}\min}  \; \parallel  \mY- \mU \mS\parallel^2_{F} +f(\mL,\mY,\mS)\hspace{1.5cm} (\mathcal{P})\medskip\\
 & \hspace{0.4cm}  \begin{array}{cll}    \mbox{s.t.}
   & \hspace{1.2cm} \text{a}) \; \mU^T \mU=\mI, \quad \bu_1=b \mathbf{1} \medskip \\
 &  \hspace{1.2cm} \text{b}) \;\mL\mU=\mU \mLambda, \; \, \mL \in \mathcal{L}, \; \text{tr}(\mL)=p \medskip\\
&  \hspace{1.2cm} \text{c}) \; \mLambda\succeq \mathbf{0}, \Lambda_{ij}=\Lambda_{ji}=0, \forall i \neq j \medskip\\
 & \hspace{1.2cm} \text{d}) \; \mS \in \mathcal{B}_K, \medskip\\
 \end{array}
   \end{split}
\eeq
where: the objective function is the sum of the data fitting  error plus the function $f(\mL,\mY,\mS)$ enabling, as we will see in Section IV,  the recovering of a graph topology which reflects the smoothness of the observed graph signals; the constraint a) forces $\mU$ to be unitary and includes a vector proportional to the  vector of all ones; b) constrains $\mL$ to possess the desired structure and imposes that $\mU$ and $\mLambda$ are its eigen-decomposition; the trace constraint, with $p>0$, is used to avoid the trivial solution; c)  forces $\mL$ to be positive semidefinite  (and $\mLambda$ to be diagonal); finally,
d) enforces the $K$-block sparsity of the signals. However, problem $\mathcal{P}$ is  non-convex in both the objective function and the constraint set.
To make the problem tractable,  we devise a strategy that decouples   $\mathcal{P}$ in two simpler optimization problems.
The proposed strategy is rooted on two basic steps: i) first, given the observation matrix $\mY$, learn  the orthonormal  transform matrix $\mU$ and the sparse signal matrix $\mS$ {\it jointly}, in order to minimize the fitting error $\parallel\mY-\mU\mS\parallel_{\text{F}}^{2}$, subject to a block-sparsity constraint; ii) second, given $\mU$,  infer the graph Laplacian matrix $\mL$ that admits the column of $\mU$ as its eigenvectors.
While  the theoretical proof of convergence of this double-step approach is still an open question, the effectiveness of the proposed strategy has been corroborated by extensive numerical results over both simulated and real data.
In the next sections, we will develop alternative, suboptimal algorithms to solve  the proposed graph inference problem and we will present the numerical results.

 \section{LEARNING FOURIER BASIS AND GRAPH SIGNALS JOINTLY}
In this section we describe   the first step of the proposed graph recovering strategy, aimed at learning the pair of matrices $(\mU,\mS)$  jointly, up to a multiplicative rotation matrix. Observe that, since $\mU \in \mathbb{R}^{N\times N}$ is full rank, if $\mU$ is known, recovering $\mS$ from $\mY$ is easy.
However, in our case, $\mU$ is unknown. Hence, recovering {\it both} $\mS$ and $\mU$ is  a challenging
task, which is prone to an identifiability problem.
To solve this task,  we use a method conceptually similar to sparsifying transform learning, assuming a unitary transform \cite{Bresler15}, with the only difference that we enforce {\it block}-sparsity and we impose, a priori, that one of the eigenvectors be a constant vector. More specifically, we set $\bu_1=b \mathbf{1}$, with $b=1/\sqrt{N}$.

Because of the unitary property of $\mU$, we can write $\parallel \mY-\mU \mS\parallel^2_{F}=\parallel \mU^T \mY- \mS\parallel^2_{F}$. Our goal is to find the block-sparse columns $\{\bs_i\}_{i=1}^M$, and the Fourier basis  orthonormal vectors $\{\bu_i\}_{i=1}^N$,  minimizing the sparsification error.
 Therefore, given the training matrix $\mY \in \mathbb{R}^{N \times M}$, motivated by the original problem $\mathcal{P}$, we formulate the following optimization problem:
 \beq \label{eq:P}
\begin{split}
 & \underset{\mathbf{U} \in \mathbb{R}^{N \times N}, \mathbf{S} \in \mathbb{R}^{N \times M}}{\hspace{-1cm}\min}  \quad \parallel \mU^T \mY- \mS\parallel^2_{F} \hspace{1.5cm} (\mathcal{P}_{U,S})\medskip\\
 & \hspace{0.4cm}  \begin{array}{cll}    \mbox{s.t.}   & \hspace{1.8cm}  \mU^T \mU=\mI, \quad \bu_1=b \mathbf{1}  \medskip \\  & \hspace{1.8cm}  \mS \in \mathcal{B}_K.
 \end{array}
   \end{split}
\eeq
 Although the objective function is convex, problem $\mathcal{P}_{U,S}$ is non-convex due to the nonconvexity of both the orthonormality and sparsity
constraints.
In the following, inspired by  \cite{Bresler15}, we propose an  algorithm to solve $\mathcal{P}_{U,S}$  that alternates between solving for the block-sparse signal matrix $\mS$ and for the orthonormal  transform  $\mU$.
\subsection{Alternating optimization algorithm}
The proposed approach to solve the above non-convex problem $\mathcal{P}_{U,S}$ alternates between the minimization of the objective function  with respect to $\mS$ and $\mU$ at each step $k$,  iteratively,
as follows:
\beq \nonumber
\begin{array}{lll}
    \begin{array}{ll}  \text{1.} \quad \hat{\mS}^k \triangleq & \underset{\mathbf{S} \in \mathbb{R}^{N \times M}}{\arg \min}  \quad \| (\hat{\mU}^{k-1})^{T} \mY-\mS\|^{2}_{F}
    \hspace{0.9cm} (\mathcal{S}_k) \\
     & \hspace{0.4cm} \mbox{s.t.}  \hspace{0.8cm} \mS \in \mathcal{B}_K
 \end{array} \medskip \vspace{0.3cm}\\
    \begin{array}{ll}  \text{2.} \quad  \hat{\mU}^k \triangleq & \underset{\mathbf{U} \in \mathbb{R}^{N \times N}}{\arg \min}  \quad \| \mU^{T} \mY-\hat{\mS}^{k}\|^{2}_{F} \hspace{1.4cm} (\mathcal{U}_k)\medskip\\
    & \hspace{0.4cm} \mbox{s.t.} \hspace{0.8cm} \mU^T\mU=\mI,  \bu_1=b \mathbf{1}. \end{array}\medskip
 \end{array} \label{iter_alg}
\eeq
The algorithm iterates until a termination criterion is met, within a prescribed accuracy, as summarized in Algorithm \ref{algorithm:Alg_1}.

\begin{algorithm}[t]
\small

    \quad  {Set}   $\mU^0 \in \mathbb{R}^{N \times N}$, $\mU^{0 \, T} \mU^0=\mI$, $\bu_1=b \mathbf{1}$, $k=1$ 

    \quad  {\textbf{Repeat}}

 \quad \quad           {Find} $\hat{\mS}^k$ {as} {solution} {of} $\mathcal{S}_k$ in \,(\ref{eq:S_iter});

   \quad \quad          {Compute} {the} {optimal point} $\hat{\mU}^k$ as in  (\ref{U_op});

  \quad \quad         $k=k+1$;

  \quad {\textbf{until}}   {\textbf{convergence}
  }

       \caption{: Learning  Fourier basis and graph signals}
 \label{algorithm:Alg_1}
\end{algorithm}

The interesting feature is that  the two non-convex problems $\mathcal{S}_k$ and $\mathcal{U}_k$
admit closed-form solutions.  Therefore,  both steps of the above algorithm can be performed exactly, with a
low computational cost, as follows.\\
\textit{1) Computation of the critical point $\hat{\mS}^k$:}
To solve problem $\mathcal{S}_k$ we need to define the $(p,q)$-mixed norm of the matrix $\mS$ as
\beq
\parallel \mS \parallel_{(p,q)}={\left(\ds \sum_{i=1}^{N} \parallel \mS(i,:)\parallel_{p}^q \right)}^{1/q}.
\eeq
When $q=0$, $\parallel \mS \parallel_{(p,0)}$ simply counts the number of nonzero rows in $\mS$, whereas for $p=q=2$ we get
 $\parallel \mS \parallel_{(2,2)}^2=\parallel \mS \parallel_{F}^2$.
  Hence, we reformulate problem $\mathcal{S}_k$ as in \cite{Baraniuk}, to obtain the best block-sparse approximation of the matrix $\mS$ as follows
\beq  \label{eq:S_iter}
\begin{array}{lll}
    \begin{array}{ll} & \underset{\mathbf{S} \in \mathbb{R}^{N \times M}}{\arg \min}  \quad \| (\hat{\mU}^{k-1})^{T} \mY-\mS\|^{2}_{F}
    \hspace{0.9cm} (\mathcal{S}_k) \\
     & \hspace{0.4cm} \mbox{s.t.}  \hspace{0.7cm} \parallel \mS \parallel_{(2,0)}\leq K
\end{array}\end{array}
 \eeq
 where   we force the rows of $\mS$   to be $K$-block sparse.
The solution of this problem is simply obtained by sorting the rows of $(\hat{\mU}^{k-1})^{T} \mY$ by their $l_2$-norm and then selecting the rows with largest norms \cite{Baraniuk}.

 \textit{2) Computation of the critical point $\hat{\mU}^k$:} The optimal transform update  for the transform matrix $\mathbf{U}^k$ is provided in the following proposition.
\begin{proposition}
Define $\bar{\mY}^{k}= \mY (\mS^{k})^T$, ${\mZ}^{k}=[\bar{\my}_2^{k},\ldots,\bar{\my}_N^{k}]$,  where $\bar{\my}_i^{k}$ represents the $i$th column of  $\bar{\mY}^{k}$, and
  $\mZ_{\perp}^{k}=\mP \mZ^{k}$
with $\bP=\mI-\mathbf{1} \mathbf{1}^T/N$.
If $r=\text{rank}({\mZ}_{\perp}^{k})$, we have
\beq \label{Zp}
{\mZ}_{\perp}^{k}=\mX  \mSigma \mV^T=[\mX_r \, \mX_s]  \mSigma \mV^T
\eeq
where the $N\times r$ matrix $\mX_r$ contains as columns  the  $r$
left-eigenvectors associated to the non-zero singular values of $\mZ_{\perp}^{k}$, $\mX_s$ is an orthonormal  $N\times N-r$ matrix selected  as  belonging
to the nullspace of the matrix $\mB=[ \mathbf{1}^T; \mX_r^T]$, i.e. $\mB \mX_s=\mathbf{0}$.
The optimal solution of the non-convex problem $\mathcal{U}_k$ is then
\beq \label{U_op}
\hat{\mU}^{k}=[ \mathbf{1}b \quad \bar{\mU}^{\star}]
\eeq
where $\bar{\mU}^{\star}=\mX^{-} \mV^T$ and $\mX^{-}$ is obtained by removing from $\mX$ its last column.
\end{proposition}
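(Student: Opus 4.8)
The plan is to recognize $\mathcal{U}_k$ as a constrained orthogonal Procrustes problem and solve it in closed form via an SVD. First I would rewrite the cost: for any feasible $\mU$ (so $\mU^T\mU=\mI$), $\|\mU^T\mY-\hat{\mS}^k\|_F^2=\|\mY\|_F^2-2\,\text{tr}\!\big(\mU^T\mY(\hat{\mS}^k)^T\big)+\|\hat{\mS}^k\|_F^2$, so $\mathcal{U}_k$ is equivalent to maximizing $\text{tr}(\mU^T\bar{\mY}^k)$ over orthogonal $\mU$ whose first column is fixed at $\mathbf{1}b$. Writing $\mU=[\,\mathbf{1}b\ \ \bar{\mU}\,]$, the constraint $\mU^T\mU=\mI$ is equivalent to $\bar{\mU}^T\bar{\mU}=\mI_{N-1}$ together with $\bar{\mU}^T\mathbf{1}=\mathbf{0}$, while the objective splits as $\text{tr}(\mU^T\bar{\mY}^k)=b\,\mathbf{1}^T\bar{\my}_1^k+\text{tr}(\bar{\mU}^T\mZ^k)$, whose first term is a fixed constant. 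Hence there remains the problem $\max_{\bar{\mU}}\text{tr}(\bar{\mU}^T\mZ^k)$ s.t. $\bar{\mU}^T\bar{\mU}=\mI_{N-1}$, $\bar{\mU}^T\mathbf{1}=\mathbf{0}$, whose feasible set is compact so the maximum is attained.

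Next I would eliminate the linear constraint. The condition $\bar{\mU}^T\mathbf{1}=\mathbf{0}$ forces every column of $\bar{\mU}$ into $\mathbf{1}^\perp=\text{range}(\mP)$, so $\mP\bar{\mU}=\bar{\mU}$ and $\text{tr}(\bar{\mU}^T\mZ^k)=\text{tr}(\bar{\mU}^T\mP\mZ^k)=\text{tr}(\bar{\mU}^T\mZ_{\perp}^{k})$. The feasible $\bar{\mU}$ are now exactly the $N\times(N-1)$ matrices whose columns form an orthonormal basis of the $(N-1)$-dimensional space $\mathbf{1}^\perp$; fixing any such basis $\mW\in\mathbb{R}^{N\times(N-1)}$ (so $\mW\mW^T=\mP$), every feasible $\bar{\mU}$ equals $\mW\mR$ for some orthogonal $\mR\in\mathbb{R}^{(N-1)\times(N-1)}$, and conversely. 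Since $\text{range}(\mZ_{\perp}^{k})\subseteq\text{range}(\mW)$, we have $\mZ_{\perp}^{k}=\mW(\mW^T\mZ_{\perp}^{k})$, and an SVD $\mW^T\mZ_{\perp}^{k}=\hat{\mX}\mSigma\mV^T$ lifts to the SVD $\mZ_{\perp}^{k}=(\mW\hat{\mX})\mSigma\mV^T$, whose left factor $\mX:=\mW\hat{\mX}$ has orthonormal columns lying in $\mathbf{1}^\perp$; its first $r=\text{rank}(\mZ_{\perp}^{k})$ columns $\mX_r$ are the left singular vectors for the nonzero singular values, and the remaining columns $\mX_s$ may be taken as any orthonormal completion of $\mathbf{1}^\perp$ orthogonal to $\mX_r$, i.e. any $\mX_s$ with $\mB\mX_s=\mathbf{0}$ and $\mX_s^T\mX_s=\mI$.

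Then the problem becomes a square orthogonal Procrustes problem, solved by the von Neumann trace inequality: $\text{tr}(\bar{\mU}^T\mZ_{\perp}^{k})=\text{tr}(\mR^T\mW^T\mZ_{\perp}^{k})=\text{tr}\!\big((\hat{\mX}^T\mR\mV)^T\mSigma\big)=\sum_i[\hat{\mX}^T\mR\mV]_{ii}\,\sigma_i\le\sum_i\sigma_i$, because $\hat{\mX}^T\mR\mV$ is orthogonal (so its diagonal entries have modulus at most $1$) and $\sigma_i\ge0$; equality is attained by $\hat{\mX}^T\mR\mV=\mI$, i.e. $\mR^\star=\hat{\mX}\mV^T$. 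Substituting back gives $\bar{\mU}^\star=\mW\mR^\star=\mW\hat{\mX}\mV^T=\mX\mV^T$ and $\hat{\mU}^k=[\,\mathbf{1}b\ \ \bar{\mU}^\star\,]$, which is (\ref{U_op}) (with $\mX^-$ recording that, when the full $N\times N$ left SVD factor is used, the extra $\mathbf{1}$-direction null-space column is discarded). Finally I would check feasibility: the columns of $\bar{\mU}^\star$ are orthonormal (product of an orthonormal-column matrix and an orthogonal one) and lie in $\mathbf{1}^\perp$, so $[\,\mathbf{1}b\ \ \bar{\mU}^\star\,]$ is orthogonal with the prescribed first column.

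The main obstacle is the constraint handling in the middle step: the feasible set for $\bar{\mU}$ is not the orthogonal group but the set of orthonormal $(N-1)$-frames confined to the hyperplane $\mathbf{1}^\perp$, so the textbook Procrustes formula cannot be quoted verbatim — one must change coordinates through a basis $\mW$ of $\mathbf{1}^\perp$ — and, in the rank-deficient regime $r<N-1$, one must argue that the optimizer's columns associated with vanishing singular values are free precisely up to an orthonormal completion $\mX_s$ of the left singular subspace, which is exactly the nonuniqueness recorded in the statement. The trace maximization itself is routine.
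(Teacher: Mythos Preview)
Your proof is correct and reaches the same optimizer, but the route differs from the paper's. The paper writes the Lagrangian for $\bar{\mathcal{Q}}_k$, derives from the KKT stationarity condition the equation $\bar{\mU}\big((\mZ_\perp^k)^T\mZ_\perp^k\big)^{1/2}=\mZ_\perp^k$, substitutes the SVD $\mZ_\perp^k=\mX\mSigma\mV^T$ to read off the candidate $\bar{\mU}^\star=\mX^{-}\mV^T$, and only then verifies global optimality by the same diagonal-entry trace bound you use. You instead \emph{parameterize} the feasible set up front: any admissible $\bar{\mU}$ is $\mW\mR$ with $\mW$ an orthonormal basis of $\mathbf{1}^\perp$ and $\mR\in O(N-1)$, which collapses the constrained problem to a standard square Procrustes problem solved directly by von Neumann's inequality. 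Your approach is a bit more economical (no multipliers, no matrix square roots) and makes the role of the projector $\mP$ and the nonuniqueness in the rank-deficient case transparent; the paper's Lagrangian derivation, on the other hand, yields the polar-decomposition form $\bar{\mU}^\star(\mZ_\perp^{k\,T}\mZ_\perp^k)^{1/2}=\mZ_\perp^k$ as a by-product, which some readers may find illuminating. Substantively the two arguments coincide at the final trace inequality step.
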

\begin{proof}
 See Appendix A.
 \end{proof}

Of course, the pair of matrices $(\mU, \mS)$ solving (\ref{eq:P}) can only be found up to the multiplication by a unitary matrix $\mW$ that preserves the sparsity of $\mS$. In fact,
$\mU\mS=\mU \mW \mW^T\mS:= \hat{\mU}\hat{\mS}$, where $\mW$ is a unitary matrix, i.e., $\mW\mW^T=\mI$. In the next section, we will take into account the presence of this unknown rotation matrix in the recovery of the graph topology. Nevertheless,  if the graph signals assume values in a discrete alphabet, it is possible to de-rotate the estimated transform $\hat{\mU}$ by using the adaptive, iterative algorithm developed in \cite{Macchi}, based on the stochastic gradient approximation, as shown in Section VI.

\section{LEARNING THE GRAPH TOPOLOGY}
In this section we propose a strategy aimed at identifying the topology of the graph  by learning the Laplacian matrix $\mL$
from the estimated transform matrix $\hat{\mU}$.
Of course, if we would know the eigenvalue and eigenvector matrices of $\mL$, we could find $\mL$ by solving
\beq \mL \mU= \mU \mLambda.
\eeq
However, from the algorithm solving problem (\ref{eq:P}) we can only expect to find two matrices $\hat{\mU}=\mU \mW$ and $\hat{\mS}=\mW^T \mS$, where $\mW$ is a unitary matrix that preserves the block-sparsity of $\mS$. This means that, if we take the product $\mL \hat{\mU}$, we get
 \beq
\mL \hat{\mU}= \mL \mU \mW= \mU \mW \mW^{T}\mLambda \mW.
 \eeq
Introducing the matrix $\mC=\mW^{T}\mLambda \mW\succeq \mathbf{0}$, we can write
 \beq \label{eq:C_const}
 \mL \hat{\mU}= \hat{\mU} \mC.
  \eeq
Furthermore, having supposed that the observed graph signals are $\mathcal{K}$-bandlimited,
with $|\mathcal{K}|=K$, we can only assume, from the first step of the algorithm, knowledge of the $K$ columns of $\hat{\mU}$ associated to the observed signal subspace.
Under this setting, the constraint in  (\ref{eq:C_const}) reduces to
 \beq \label{eq:C_const2}
 \mL \hat{\mU}_{\mathcal{K}}= \hat{\mU}_{\mathcal{K}} \mC_{\mathcal{K}}
  \eeq
 with $\mC_{\mathcal{K}}\in \mathbb{R}^{K\times K}$, $\mC_{\mathcal{K}}=\mPhi^{-1}\mLambda_{\mathcal{K}} \mPhi\succeq \mathbf{0}$, $\mPhi$ a unitary matrix and $\mLambda_{\mathcal{K}}$ the diagonal matrix obtained by selecting the set $\mathcal{K}$ of diagonal entries of $\mLambda$.

For the sake of simplifying the optimization problem $\mathcal{P}$, the graph learning problem can be formulated as follows
\beq \nonumber
\begin{array}{lll}
\underset{\mathbf{L} \in \mathbb{R}^{N \times N}, \mathbf{C}_{\mathcal{K}} \in \mathbb{R}^{K \times K}}{\min} \quad \!  f(\mL, \mY,\hat{\mS})\qquad \qquad \qquad(\mathcal{P}_{f})\\
\hspace{1cm} \left. \begin{array}{lll} \mbox{s.t.}
      &\hspace{0.8cm}\mL \in \mathcal{L}, \text{tr}(\mL)=p \medskip\\
       &\hspace{0.8cm}\mL \hat{\mU}_{\mathcal{K}}= \hat{\mU}_{\mathcal{K}} \mC_{\mathcal{K}}, \; \mC_{\mathcal{K}}\succeq \mathbf{0}
      \end{array} \right \} \triangleq \mathcal{X}(\hat{\mU}_{\mathcal{K}})
\end{array}
\eeq
with $\mathcal{L}$ defined as in (\ref{L}). Different choices for
the objective function $f(\mL,\mY, \hat{\mS})$ have been proposed in the literature, such as, for example: i) $f(\mL)=||\mL||_0$, as  in  \cite{Segarrajournal2016}, or
ii) $f(\mL,\mY)=\text{tr}(\mY \mL \mY^T)+\mu ||\mL||_F$, as in \cite{Frossard_jour}. In the first case, the goal was to recover the sparsest graph, given the  eigenvectors of the shift matrix, estimated via principal component analysis (PCA) of  the sample covariance of graph signals diffused on the  network. In the second case, the goal was to minimize a linear combination of the $l_2$-norm total variation of the observed signal $\mY$, measuring the signal smoothness, plus a Frobenius norm term, specifically added to control (through the coefficient $\mu\geq 0$) the distribution of the off-diagonal
entries of $\mL$. Note that, because of the constraint on the Laplacian trace, setting $\mu$ to large values penalizes large degrees and
leads, in the limit, to dense graphs with constant degrees across the nodes.\\
In this paper, our goal is to infer a graph topology that gives rise to a band-limited signal, consistent with the observed signal. This inference is based on  the joint estimation of the $|{\cal K}|$-sparse signal vector $\bs_i$ and the Laplacian eigenvectors \textit{jointly}. To this end, we propose two alternative  choices for the objective function, leading to the following two algorithms.
\subsection{Total variation based graph learning}
The  Total Variation Graph Learning (TV-GL)  algorithm, is formulated  as follows
\beq \nonumber
\begin{array}{ll}
\underset{\substack{\mathbf{L} \in \mathbb{R}^{N \times N} \medskip \\ \; \; \mathbf{C}_{\mathcal{K}} \in \mathbb{R}^{K \times K}}}{\min} \quad f_1(\mL,\mY,\mu)\triangleq   \text{TV}(\mL,\mY) + \mu \parallel \mL \parallel_{F}^{2} \hspace{0.55cm} (\mathcal{P}_{f_1})\\
\hspace{0.4cm}  \begin{array}{lll} \mbox{s.t.}
      \hspace{1cm}(\mL, \mC_{\mathcal{K}})  \in \mathcal{X}(\hat{\mU}_{\mathcal{K}})     \end{array}
\end{array}
\eeq
where $\text{TV}(\mL,\mY)=\sum_{m=1}^{M}\sum_{i,j=1}^{N} L_{i j} \mid Y_{i m}- Y_{j m}\mid$ represents the $\ell_1$-norm total variation of the observed signal $\mY$ on the graph. Minimizing this $\ell_1$-norm tends to disconnect nodes having distinct signal values and to connect nodes have similar values, whereas the Frobenius norm controls the sparsity of the off-diagonal entries of $\mL$; the constraint on the Laplacian trace in $\mathcal{X}(\hat{\mU}_{\mathcal{K}})$ is instrumental to avoid the trivial solution by fixing the $l_1$-norm
 of $\mL$. Note that this constraint can be read as a sparsity constraint on $\mL$, since  $\text{tr}(\mL)=\parallel \mL\parallel_1$, and  the $l_1$ norm represents the tightest relaxation of the $l_0$ norm.
The coefficient $\mu$ is used to control the graph sparsity: if $\mu$ increases the Frobenius norm of $\mL$ tends to decrease and, because of the trace constraint, this leads to a more uniform distribution of the off-diagonal entries so that the number of edges increases; on the contrary, as $\mu$ goes to $0$,
 the graph tends  to be more and more sparse.

Problem $\mathcal{P}_{f_1}$ is convex since both feasible set  and objective function  are convex.

 \subsection{Estimated-signal-aided graph learning}
 We propose now an alternative method, which we name Estimated-Signal-Aided Graph Learning (ESA-GL) algorithm. In this case the  signals $\hat{\mS}$ estimated in Algorithm $1$ are used in the graph recovering method. To motivate this method,  we  begin observing that
  \beq \label{eq:l2_norm_var}
 \text{tr}(\mY^T \mL \mY)=\text{tr}({\mS}^T_{\mathcal{K}} \mLambda_{\mathcal{K}}  {\mS}_{\mathcal{K}})=\text{tr}(\hat{\mS}^T_{\mathcal{K}} \mC_{\mathcal{K}} \hat{\mS}_{\mathcal{K}})
 \eeq
  where $\mS_{\mathcal{K}}$  contains the rows with index in the support set of the  graph signals, whereas
  $\hat{\mS}_{\mathcal{K}}=\mPhi^T \mS_{\mathcal{K}}$ are  the graph signals  estimated by Algorithm \ref{algorithm:Alg_1}.
  Therefore, an alternative formulation of problem $\mathcal{P}_f$ aiming at minimizing the
 smoothing term  in
(\ref{eq:l2_norm_var}),
   is
    \beq \nonumber
\begin{array}{ll}
\underset{\substack{\mathbf{L} \in \mathbb{R}^{N \times N} \medskip \\ \; \; \mathbf{C}_{\mathcal{K}} \in \mathbb{R}^{K \times K}}}{\min} \quad f_2(\mL,\hat{\mS}_{\mathcal{K}},\mu)\triangleq \text{tr}(\hat{\mS}^T_{\mathcal{K}} \mC_{\mathcal{K}} \hat{\mS}_{\mathcal{K}})+\mu ||\mL||_F \hspace{0.4cm} (\mathcal{P}_{f_2})\\
\hspace{0.5cm}  \begin{array}{lll} \mbox{s.t.}
      \hspace{0.8cm}(\mL, \mC_{\mathcal{K}})  \in \mathcal{X}(\hat{\mU}_{\mathcal{K}}).     \end{array}
\end{array}
\eeq
Also in this case the formulated problem is convex, so that it can be efficiently solved.

  \section{Laplacian identifiability conditions}\label{sec:identif}
Before assessing the goodness of the proposed inference algorithms, in this section we investigate the conditions under which the Laplacian matrix can be recovered uniquely. Interestingly, the derivations lead to an interesting relation between the bandwidth of the signal and the sparsity of the graph. We start looking for the conditions under which the feasible set $\mathcal{X}(\hat{\mU}_{\mathcal{K}})$ of problems $\mathcal{P}_{f_1}$ and  $\mathcal{P}_{f_2}$ reduces to a singleton.
To  derive closed form bounds for the identifiability conditions of the Laplacian matrix, we need to assume that  the transform matrix is perfectly known, so that
    $\hat{\mU}_{\mathcal{K}}=\mU_{\mathcal{K}}$. Even though these are ideal conditions because our proposed algorithms are only able to recover $\mU_{\mathcal{K}}$ up to a rotation, the closed forms provide a meaningful insight into the relation between graph sparsity and signal bandwidth. Furthermore, we remark that it is possible to recover the matrix  $\mU_{\mathcal{K}}$ with no rotation in the case where the signal values belong to a known discrete set, by using the blind algorithm of \cite{Macchi}, as shown in Section VI. Assume w.l.o.g. that the  $K$ eigenvectors of $\mU_{\mathcal{K}}$ are sorted
  in increasing order of their corresponding   eigenvalues so that $0=\lambda_{\mathcal{K},1}\leq\lambda_{\mathcal{K},2}\leq\ldots \leq \lambda_{\mathcal{K}, K}$ with $\mLambda_{\mathcal{K}}=\text{diag}(\boldsymbol{\lambda}_{\mathcal{K}})$.
  As a consequence, the  convex set   $\mathcal{X}(\mU_{\mathcal{K}})$ of problems $\mathcal{P}_{f_1}$ and $\mathcal{P}_{f_2}$, can be written as
\beq \label{eq_system}
\mathcal{X}(\mU_{\mathcal{K}})\triangleq \left\{\begin{array}{lllll} \text{a)} \quad \mL \mU_{\mathcal{K}}= \mU_{\mathcal{K}} \mLambda_{\mathcal{K}} \medskip\\
\text{b)} \quad \mL \mathbf{1}=\mathbf{0}\medskip \\
\text{c)} \quad L_{i j}=L_{j i}\leq 0,\, \forall i\neq j \medskip \\
\text{d)} \quad \lambda_{\mathcal{K},1}=0, \lambda_{\mathcal{K},2}\geq 0 \medskip \\
\text{e)} \quad   \lambda_{\mathcal{K},i+1}\geq \lambda_{\mathcal{K},
i}, \, i=2,\ldots, K-1 \medskip \\
\text{f)} \quad \text{tr}(\mL)=p
\end{array} \right.
\eeq
where conditions $\text{b)}-\text{e)}$   impose the desired Laplacian structure to $\mL$ and $\text{f)}$
fixes its $l_1$-norm  by avoiding the trivial solution.
For the sake of simplicity, we focus on the case where the subset  $\mathcal{K}$  consists
of the indices associated to the first $K$ columns of $\mU$; otherwise one can follow similar arguments for any subsets of indices
by properly rewriting conditions $\text{d)}-\text{e)}$.
As shown in Appendix \ref{B:closed_form_X},
we can reduce equations a)-c) and f) in (\ref{eq_system})
to the following compact form:
\beq \label{eq_matrix_system}
 \begin{array}{lllll} \mF \mx=\mb, \quad \quad \mx \in \mathbb{R}^{ N (N-1)/2+ K-1}_{+}
\end{array}
\eeq
where we defined $\mx\triangleq [-\mz ; \bar{\boldmath{\mlambda}}]$; $\mz \triangleq \text{vech}(\mL) \in \mathbb{R}^{N (N-1)/2}_{-}$; $\text{vech}(\mL)$ the half-vectorization of $\mL$ obtained
  by  vectorizing only the lower triangular part (except the diagonal entries)  of $\mL$;  $\mathbb{R}_{+}$ and $\mathbb{R}_{-}$
   the sets of, respectively, nonnegative and nonpositive real numbers;
 $\bar{\boldmath{\mlambda}}\triangleq \{\lambda_{\mathcal{K},i}\}_{\forall i \in \mathcal{K}^{-}}$
where, assuming the entries of $\boldmath{\mlambda}_{\mathcal{K}}$ in increasing order,  the index set $\mathcal{K}^{-}$ is obtained by removing from $\mathcal{K}$ the first index corresponding to $\lambda_{\mathcal{K},1}$; $\mb=[\mathbf{0}_{K N};\, p]$; and, finally, the coefficient matrix $\mF \in \mathbb{R}^{m\times n}$,
with $m=K N+1$, $n=N (N-1)/2+ K-1$ is defined in Appendix \ref{B:closed_form_X}, equation (\ref{eq:F_matrix}).

Note that the  rank $q$ of the coefficient matrix $\mF \in \mathbb{R}^{m\times n}$ is  $q \geq K-1$
and $q \leq \min \{n, m\}$.
\begin{proposition} \label{ranK_U}
\textit{Assume  the set $\mathcal{X}(\mU_{\mathcal{K}})$ to be feasible, then it holds:\\
a) $K-1 \leq \text{rank}(\mF)\leq K N+1$ for  $K \leq \frac{N}{2}-\frac{2}{N-1}$ and $K-1 \leq  \text{rank}(\mF)\leq N (N-1)/2+ K-1$  for $K > \frac{N}{2}-\frac{2}{N-1}$; \\ b) if $\text{rank}(\mF)= K N+1$
 and $K = \frac{N}{2}-\frac{2}{N-1}$, or $\text{rank}(\mF)=N (N-1)/2+ K-1$  and $K >\frac{N}{2}-\frac{2}{N-1}$, the set $\mathcal{X}(\mU_\mathcal{K})$
is a singleton.}
\end{proposition}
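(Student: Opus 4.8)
The plan is to derive Proposition~\ref{ranK_U} directly from the reduction established in Appendix~\ref{B:closed_form_X} — namely that conditions a)--c) and f) of (\ref{eq_system}) are equivalent to the linear system $\mF\mx=\mb$ with $\mx\geq\mathbf{0}$ and $\mF\in\mathbb{R}^{m\times n}$, $m=KN+1$, $n=N(N-1)/2+K-1$ — together with the elementary bounds $K-1\leq\text{rank}(\mF)\leq\min\{m,n\}$ recorded just before the statement. For part a) the only real content is to identify $\min\{m,n\}$. I would compute $m-n = K(N-1)-\tfrac{N(N-1)}{2}+2$, so that $m\leq n\iff K(N-1)\leq\tfrac{N(N-1)}{2}-2\iff K\leq\tfrac{N}{2}-\tfrac{2}{N-1}$, with $m=n$ exactly at $K=\tfrac{N}{2}-\tfrac{2}{N-1}$. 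Hence $\min\{m,n\}=KN+1$ when $K\leq\tfrac{N}{2}-\tfrac{2}{N-1}$ and $\min\{m,n\}=N(N-1)/2+K-1$ otherwise; combining with $\text{rank}(\mF)\geq K-1$ yields exactly the two inequality chains of a).

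For part b) the key remark is that, in both listed situations, the hypothesis forces $\mF$ to have \emph{full column rank} $n$. Indeed, when $K=\tfrac{N}{2}-\tfrac{2}{N-1}$ we have $m=n$, so $\text{rank}(\mF)=KN+1=m=n$ makes $\mF$ a square nonsingular matrix; when $K>\tfrac{N}{2}-\tfrac{2}{N-1}$ we have $n<m$, and the hypothesis $\text{rank}(\mF)=N(N-1)/2+K-1$ is precisely the full-column-rank condition. In either case the map $\mx\mapsto\mF\mx$ is injective, so $\mF\mx=\mb$ has \emph{at most} one solution; since $\mathcal{X}(\mU_{\mathcal{K}})$ is assumed feasible, it has \emph{at least} one, hence a unique $\mx$. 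That $\mx$ determines $\mz=\text{vech}(\mL)$ and thus $\mL$ uniquely, and then $\mC_{\mathcal{K}}$ is uniquely recovered from $\mL\mU_{\mathcal{K}}=\mU_{\mathcal{K}}\mC_{\mathcal{K}}$ because $\mU_{\mathcal{K}}$ has full column rank $K$. Finally, adjoining the ordering constraints d)--e) of (\ref{eq_system}) cannot enlarge a one-point set, so $\mathcal{X}(\mU_{\mathcal{K}})$ is a singleton.

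The steps above are short: a threshold comparison of two explicit integers, and the standard implication ``full column rank $\Rightarrow$ injective $\Rightarrow$ at most one solution''. The genuine work — and the only real obstacle — lies in Appendix~\ref{B:closed_form_X}, which must be invoked: assembling the explicit coefficient matrix $\mF$ of (\ref{eq:F_matrix}) from a)--c) and f), checking that the sign pattern $\mx\geq\mathbf{0}$ faithfully encodes $L_{ij}\leq 0$ and $\lambda_{\mathcal{K},i}\geq 0$, and verifying $\text{rank}(\mF)\geq K-1$ (for instance by exhibiting $K-1$ linearly independent columns among those attached to $\bar{\boldsymbol{\lambda}}$, which are independent because the eigenvectors $\bu_i$ are orthonormal). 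Granting that material, Proposition~\ref{ranK_U} follows immediately.
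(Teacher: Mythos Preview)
Your proposal is correct and follows essentially the same route as the paper's own proof: both compare $m=KN+1$ with $n=N(N-1)/2+K-1$ to locate the threshold $K=\tfrac{N}{2}-\tfrac{2}{N-1}$, invoke $\text{rank}(\mF)\geq K-1$ from the $\mQ_{-}$ block, and then observe that full column rank plus feasibility forces uniqueness. Your write-up is in fact slightly more complete than the paper's, since you make explicit the recovery of $\mC_{\mathcal{K}}$ from $\mL$ via the full column rank of $\mU_{\mathcal{K}}$ and note that adjoining the ordering constraints d)--e) cannot enlarge a singleton.
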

\begin{proof}
 See Section A in  Appendix C.
 \end{proof}\\
 By using some properties about  nonnegative solutions of underdetermined systems \cite{Wang},  under the assumption of existence and uniqueness of solutions for the system  (\ref{eq_matrix_system}),  we can derive the following condition relating graph sparsity and signal bandwidth $K$.
\begin{proposition} \label{ident_cond}
\textit{If the set $\mathcal{X}(\mU_{\mathcal{K}})$ with $K < \frac{N}{2}-\frac{2}{N-1}$, is feasible and   $\{\mx \, | \, \mF \mx=\mF \mx_0, \mx \geq \mathbf{0}\}$ is a singleton  for any nonnegative $s$-sparse signal $\mx_0$, then $\frac{N}{2}-\frac{2}{N-1} > K\geq 2 s/N$ and $\parallel\mA \parallel_0 \leq K (N-2) +2 c$ where $c$ is the number of connected components in the graph.}
\end{proposition}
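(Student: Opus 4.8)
The plan is to combine a null-space (spark) bound for the coefficient matrix $\mF$ — extracted from the singleton hypothesis — with a direct count of the nonzero entries of the Laplacian's coordinate vector $\mx=[-\mz;\bar{\boldsymbol{\lambda}}]$, and then to play these two facts against the dimensions $m=KN+1$, $n=N(N-1)/2+K-1$ of $\mF$. Note first that $n>m\iff N(N-1)/2+K-1>KN+1\iff K<\frac{N}{2}-\frac{2}{N-1}$, so in the regime of the proposition $\mF$ has strictly more columns than rows and is therefore not injective, $\mathcal{N}(\mF)\neq\{\mathbf{0}\}$.

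First I would convert the uniqueness hypothesis into a lower bound on $\text{spark}(\mF)$, in the spirit of the nonnegative-recovery characterizations of \cite{Wang}. Suppose some nonzero $\mathbf{v}\in\mathcal{N}(\mF)$ had at most $s$ strictly negative coordinates; let $T\subseteq\{1,\dots,n\}$ with $|T|=s$ contain all of them, and pick any $\mx_0$ supported on $T$ and strictly positive there, so $\mx_0\ge\mathbf{0}$ and $\|\mx_0\|_0\le s$. Then for small $\epsilon>0$ the vector $\mx_0+\epsilon\mathbf{v}$ equals $\epsilon\mathbf{v}\ge\mathbf{0}$ off $T$ and stays positive on $T$, hence is a second nonnegative solution of $\mF\mx=\mF\mx_0$, contradicting the singleton assumption. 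Applying this to $\mathbf{v}$ and to $-\mathbf{v}$, every nonzero $\mathbf{v}\in\mathcal{N}(\mF)$ has at least $s+1$ strictly positive and at least $s+1$ strictly negative coordinates, so $\|\mathbf{v}\|_0\ge2s+2$; thus $\text{spark}(\mF)\ge2s+2$ and $\text{rank}(\mF)\ge2s+1$. Combining this (non-vacuous, by the previous paragraph) with $\text{rank}(\mF)\le m=KN+1$ yields $2s\le KN$, i.e. $K\ge2s/N$, which together with the standing hypothesis is the first claim.

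Next I would count the nonzeros of $\mx$. By feasibility of $\mathcal{X}(\mU_{\mathcal{K}})$ and the uniqueness just established, its single element is the coordinate vector of the true Laplacian, which is $s$-sparse by assumption. The block $-\mz=\text{vech}(-\mL)$ lists the edge weights $a_{ij}=-L_{ij}$, $i>j$, hence has exactly $\|\mA\|_0/2$ nonzeros ($\mA$ symmetric with zero diagonal). The block $\bar{\boldsymbol{\lambda}}=(\lambda_{\mathcal{K},2},\dots,\lambda_{\mathcal{K},K})$ has $K-1$ entries; a Laplacian with $c$ connected components has the eigenvalue $0$ with multiplicity $c$, so with $\mathcal{K}$ the first $K$ indices in increasing order and $c\le K$, exactly $c-1$ of these entries vanish and $K-c$ do not. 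Hence $\|\mx\|_0=\|\mA\|_0/2+(K-c)$, and since $\|\mx\|_0\le s\le KN/2$ by the first step, rearranging gives $\|\mA\|_0\le KN-2K+2c=K(N-2)+2c$ (and if $c>K$ the right-hand side only grows, so the bound still holds).

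I expect the first step to be the crux: the passage from ``unique nonnegative solution for \emph{every} $s$-sparse $\mx_0$'' to the spark bound must handle $\mx_0$ with support strictly smaller than $s$ (done by padding $T$ to size $s$), and one must make sure the perturbation $\mx_0+\epsilon\mathbf{v}$ is feasible for the reduced system (\ref{eq_matrix_system}) rather than the full system (\ref{eq_system}) with the eigenvalue-ordering constraints e); since those constraints only shrink the feasible set, a singleton for the reduced system stays a singleton once ordering is imposed, so the reduction is harmless but should be noted. The remaining bookkeeping is routine, the only caveat being the degenerate regime $c>K$, handled as above.
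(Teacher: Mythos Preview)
Your proposal is correct and follows essentially the same route as the paper: both extract the dimension bound $m\ge 2s+1$ from the singleton hypothesis (you re-derive it via an explicit null-space/spark argument, the paper simply cites Proposition~1 of \cite{Wang}), then identify the sparsity of the feasible Laplacian's coordinate vector as $\|\mA\|_0/2+(K-c)$ and combine with $m=KN+1$. The paper, like you, tacitly treats $s$ as (at least) the sparsity of the feasible Laplacian, writing $s=K-c+\|\mA\|_0/2$ outright; your phrasing ``$s$-sparse by assumption'' is the same implicit identification, just stated more carefully.
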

\begin{proof}
 See Section B in Appendix C.
 \end{proof}\\
Note that if we assume  the  graph to be disconnected, i.e. $c>1$, the upper bound on the graph sparsity tends to increase to compensate for the connectivity loss.

 \begin{figure*}[t!]
    \begin{center}
        \begin{subfigure}[b]{0.28\textwidth}
        \hspace{-0.6in}
    \vspace{0.1in}
    \centering
        \includegraphics[width=2.2in, height=2.1in]{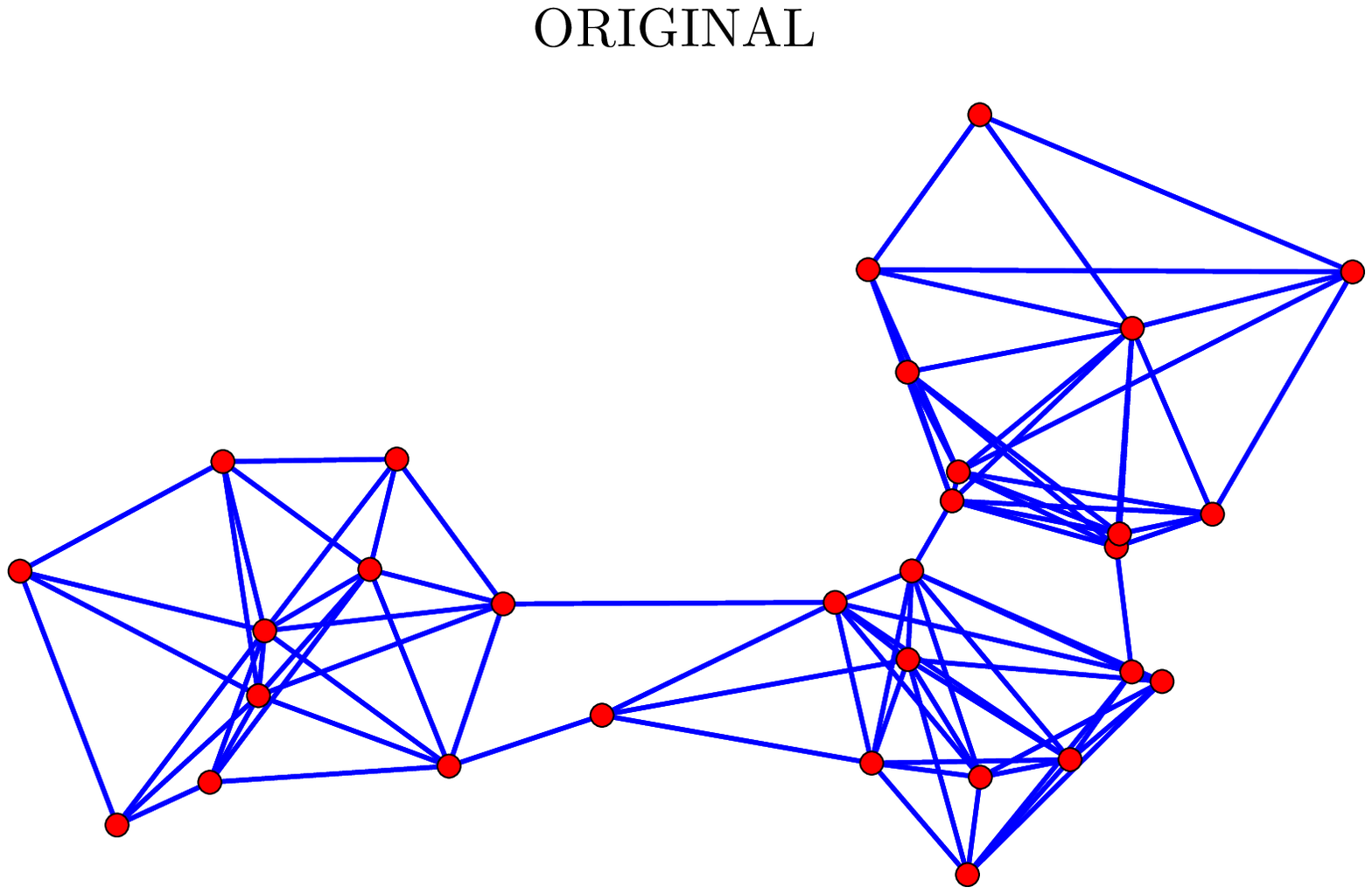}%
        \caption{\hspace{1.5cm}}\label{fig:graphs}
            \end{subfigure}%
            \hspace{-0.1in}
              \begin{subfigure}[b]{0.28\textwidth}
        \includegraphics[width=2.0in, height=2.3in, angle =-90]{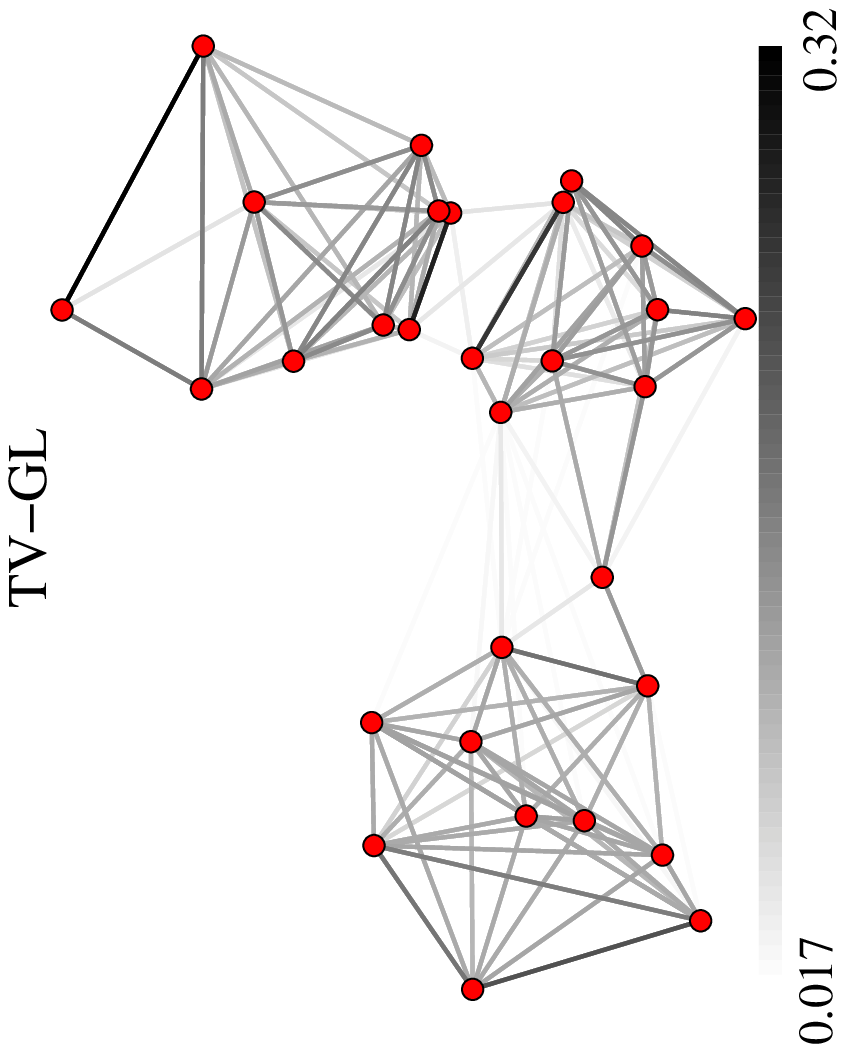}
          \caption{}\label{fig:graph_Alg1}
           \end{subfigure}%
           \qquad \qquad
            \begin{subfigure}[b]{0.28\textwidth}
        \includegraphics[width=2.0in, height=2.28in,  angle =-90]{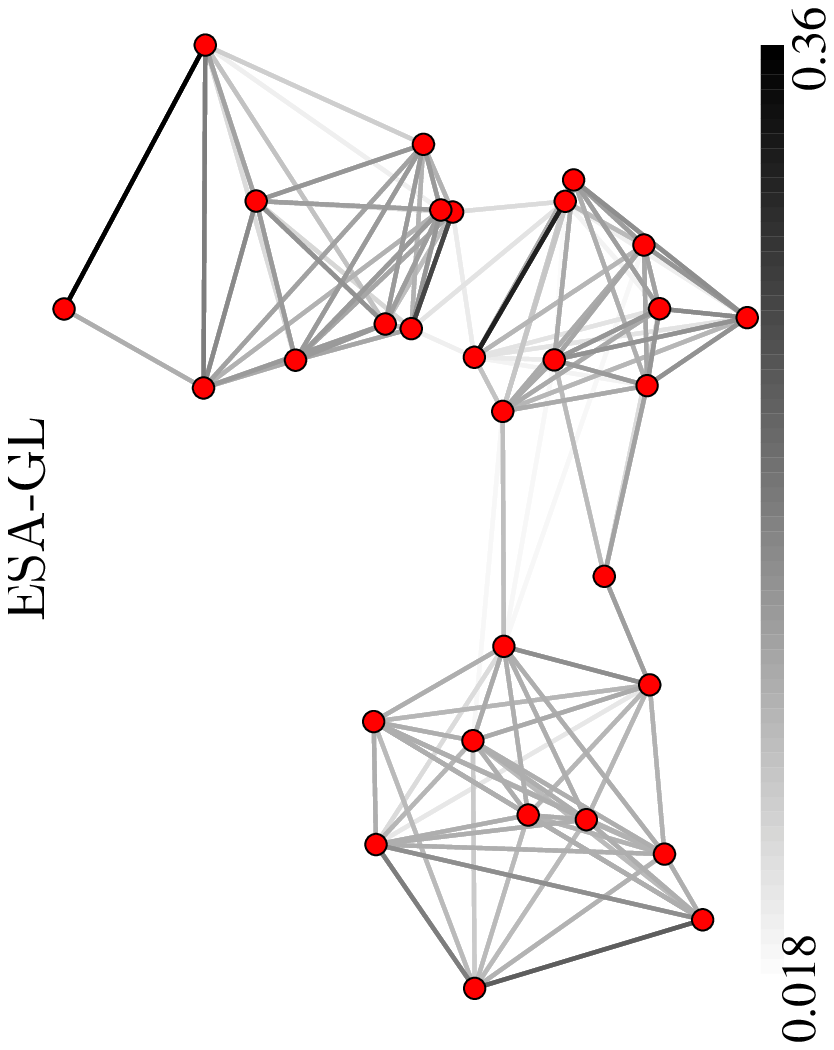}
        \caption{}\label{fig:graph_Alg2}
           \end{subfigure}%
    \caption{ Examples of graphs learning: $(\textit{a})$  original graph; $(\textit{b})$  recovered graph through TV-GL algorithm with $\mu=2$; $(\textit{c})$  recovered graph through ESA-GL algorithm $\mu=0.1$.}\label{fig:allgraphs}
    \end{center}
\end{figure*}

\vspace{-0.15cm}
\section{Numerical results}
In this section, we present some numerical results validating the effectiveness of the proposed graph-learning strategies
for  both synthetic and real-world graphs. In all numerical experiments we solved the proposed optimization problems by using
the convex optimization package CVX \cite{cvx}. \\

\noindent\textbf{\textit{Performance over synthetic data}}

Let us consider the graph in  Fig. \ref{fig:graphs} composed of $N=30$ nodes forming $3$ clusters of $10$ nodes each.  To start testing the effectiveness of the proposed graph inference strategies, in Fig. \ref{fig:allgraphs} we illustrate an example of graph topology recovery resulting by  using the  TV-GL  and ESA-GL algorithms.  
The color on each edge encodes the strength of that edge. We can observe that, learning the Fourier basis using Algorithm $1$,  both algorithms are able to ensure a good topology recovery.

As a statistical test, we run simulation over $100$ independent signal matrix realizations, with $N=30$ and $M=15$, assuming $K=3$ for the block sparsity and setting $\text{tr}(\mL)=N$. As a first performance measure, we measured the correlation coefficient between the true Laplacian entries $L_{ij}$ and their estimates $\hat{L}_{ij}$: ${\rho}(\mL,\hat{\mL})\triangleq \frac{\sum_{ij}(L_{ij}-L_{m})(\hat{L}_{ij}-\hat{L}_{m})}{\sqrt{\sum_{ij}(L_{ij}-L_{m})^2} \sqrt{\sum_{ij}(\hat{L}_{ij}-\hat{L}_{m})^2}}$ where $L_m$ and $\hat{L}_m$ are the  average values of the entries, respectively, of the true and estimated Laplacian matrices. Note that, because of  the Laplacian matrix structure, we always have $L_m=\hat{L}_m=0$.
In Fig. \ref{fig:rho_mu1},   we plot the average correlation coefficient $\bar{\rho}(\mL,\hat{\mL})$ vs. the penalty coefficient $\mu$, where $\hat{\mL}$ is computed by applying the total variation based (TV-GL) or the estimated-signal aiding  graph-learning (ESA-GL) algorithms.\\
  To get insight into the proposed algorithms, we considered both cases  when the Fourier transform matrix $\mU$ is a priori known or when it is estimated
by using Algorithm $1$. From Fig.  \ref{fig:rho_mu1},
we notice that both methods are able to ensure  high  average correlation coefficients.
   Furthermore, we can observe a high robustness of the TV-GL method to the choice of the penalty parameter with respect to the ESA-GL algorithm.
   By comparing the curves obtained by assuming perfect knowledge of $\mU$ with those derived by  estimating it  through Algorithm $1$, we can also notice that the performance loss due to estimation errors is negligible.\\
In general, the choice of $\mu$ has an impact on the final result. To assess this impact, in Fig. \ref{fig:perc_link} we illustrate the average normalized recovery error  $\bar{\mathcal{E}}_0$ versus the penalty coefficient $\mu$.
The error $\bar{\mathcal{E}}_0$ represents the fraction of  misidentified  edges and is defined as $\frac{\parallel \mathbf{A}-\hat{\mathbf{A}}_b\parallel_0}{N\cdot(N-1)}$ where $\mathbf{A}$ and $\hat{\mathbf{A}}_b$ are, respectively, the groundtruth and the recovered binary adjacency  matrices.
 The binary matrix $\hat{\mathbf{A}}_b$ has been obtained by thresholding the entries of the recovered matrix $\hat{\mathbf{A}}$ with a threshold equal to half the average values of the elements of $\hat{\mathbf{A}}$.
 We consider both cases where the Fourier basis vectors are estimated (upper plot) or   a-priori perfectly known (lower plot).
 Under this last setting, we solve problem $\mathcal{P}_{f_1}$ and $\mathcal{P}_{f_2}$ by assuming $(\mL,\mC_{\mathcal{K}}) \in \mathcal{X}(\mU_{\mathcal{K}})$. Thereby, we solve problem $\mathcal{S}_k$,  by supposing  perfect knowledge of $\mU$, in order to calculate
 the estimated signals $\hat{\mS}$,
needed as input to solve problem $\mathcal{P}_{f_2}$.
  We can see that the percentage of incorrect edge recovery can be in the order of $0.4\%$.
 Furthermore, comparing the TV-GL algorithm with the ESA-GL, we can observe that TV-GL tends to perform better.

\noindent\textbf{\textit{Performance versus signal bandwidth}}\\
 To evaluate the impact of the  signal bandwidth $K$ on the graph recovery strategy, we illustrate some numerical results performed on random graphs composed
 of $N=30$ nodes with $3$ clusters, each of $10$ nodes, by averaging the final results over $100$ graphs and signal matrix realizations for  $M=15$.
 We consider both cases of exactly or only approximately
(or compressible)  bandlimited graph signals \cite{Foucart}.
We recall that a vector is compressible if the error of its best $k$-term approximation decays quickly in $k$ \cite{Foucart}.
 We generate each band-limited signal $\bs_i$ for $i=1,\ldots, M$   as  $s_i(k)\sim \mathcal{N}(1,0.5)$ for all $k\leq  K$, and
   $s_i(k)=0$ for all $k>K$.
 In Fig. \ref{fig:bandlimited} we plot the average  recovery error  $\bar{\mathcal{E}}_0$ (upper plot)
  and the average recovery error $\bar{\mathcal{E}}_F$ (lower plot), defined as
$\frac{\parallel \mathbf{A}-\hat{\mathbf{A}} \parallel_F}{N\cdot(N-1)}$,  vs. the signal bandwidth $K$.
We selected for each $K$ the optimal coefficient $\mu$ minimizing the average recovery error.
 We can observe that the error tends  to increase as  the signal bandwidth
 $K$  gets larger than the number of clusters, equal to $K=3$.
 \begin{figure}
\centering
\includegraphics[width=8.6cm,height=5.2cm]{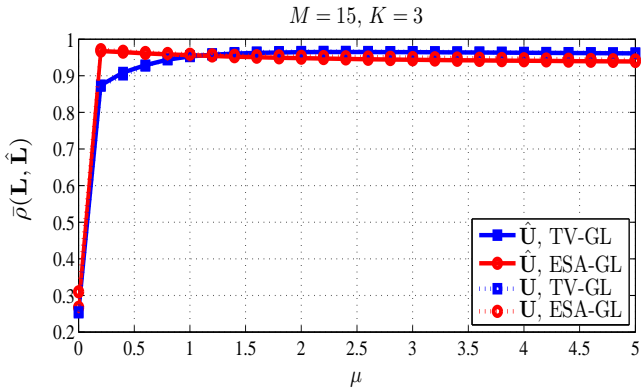}
\caption{Average correlation coefficient  versus the parameter $\mu$.}
\label{fig:rho_mu1}
\end{figure}
 \begin{figure}[h]
\centering
\includegraphics[width=8.5cm,height=7.6cm]{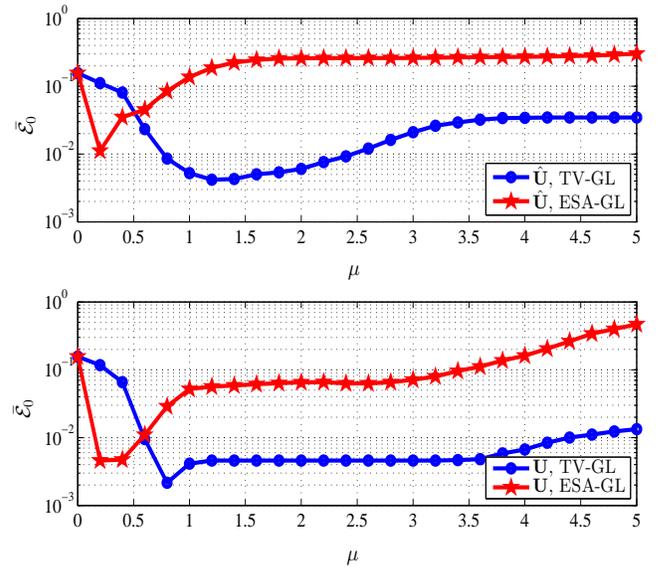}
\caption{Average recovery error  vs. the  coefficient $\mu$, using  estimated Fourier basis vectors (upper plot) and  a-priori knowledge of the Fourier transform (lower plot).}
\label{fig:perc_link}
\end{figure}
Finally,  in  Fig. \ref{fig:compressed} we report the averaged recovery errors in case of
 compressible graph signals,  generated as $s_i(k)\sim \mathcal{N}(1,0.5)$ for all $k\leq  K_v$
and   $s_i(k)=(K_v/k)^{2 \beta}$  for all $k>K_v$, with $\beta=2$, $K_v=5$ \cite{Kovacevic}.
We can observe  that for the TV-GL method the average recovery errors  increases as the signal bandwidth increases, while the
minimum is achieved for $K=5$, since $K_v=5$  represents  the approximated  graph signal bandwidth.
Additionally, the ESA-GL method seems to reach minimum values quite similar to those of  the TV-GL algorithm for $K$  close to $5$.\\

\begin{figure*}
\centering
\begin{subfigure}[b]{0.45\textwidth}
\centering
               \includegraphics[width=\textwidth]{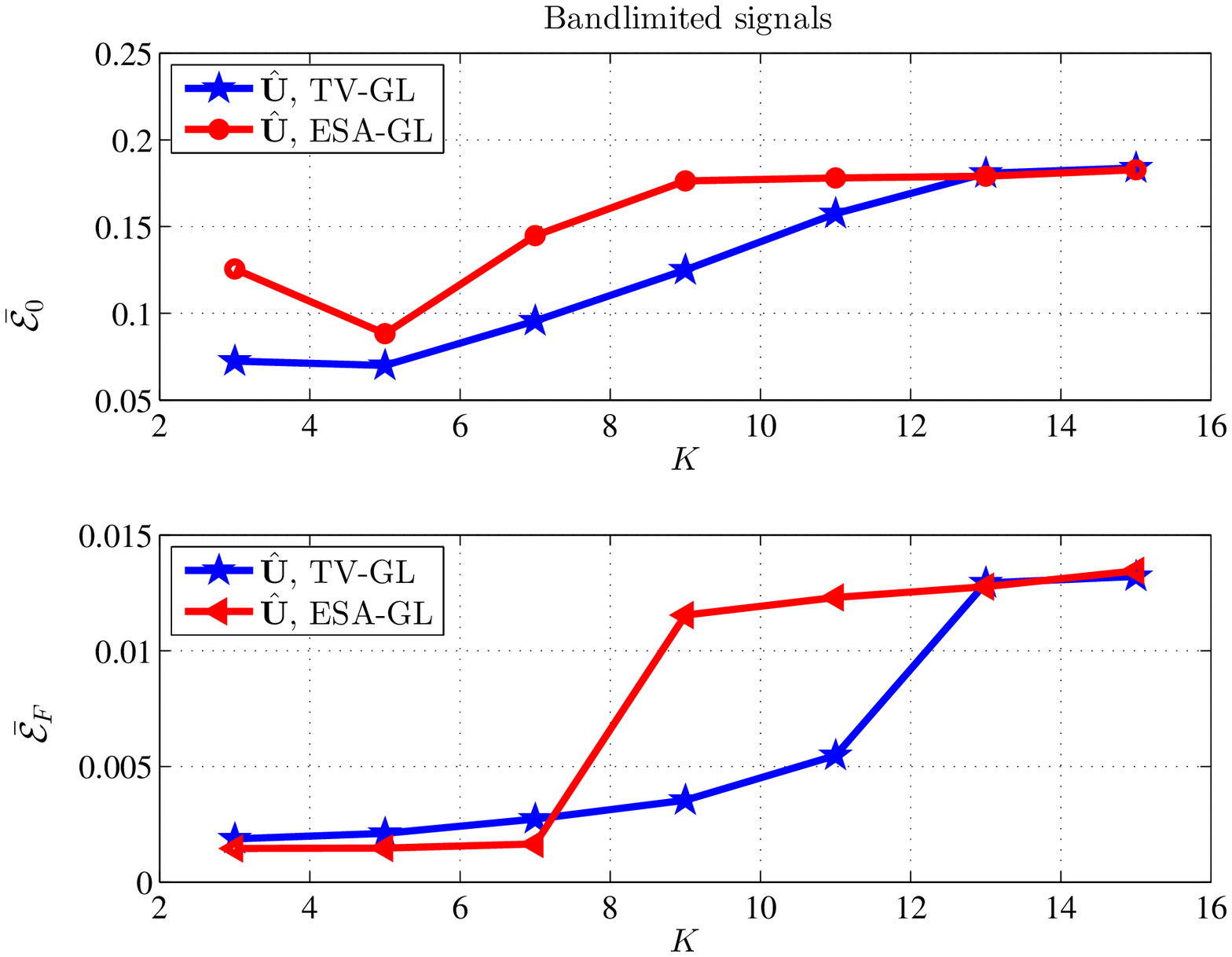}
               \caption{}\label{fig:bandlimited}
     \end{subfigure}
\hfill
\begin{subfigure}[b]{0.45\textwidth}
\centering
              \includegraphics[width=\textwidth]{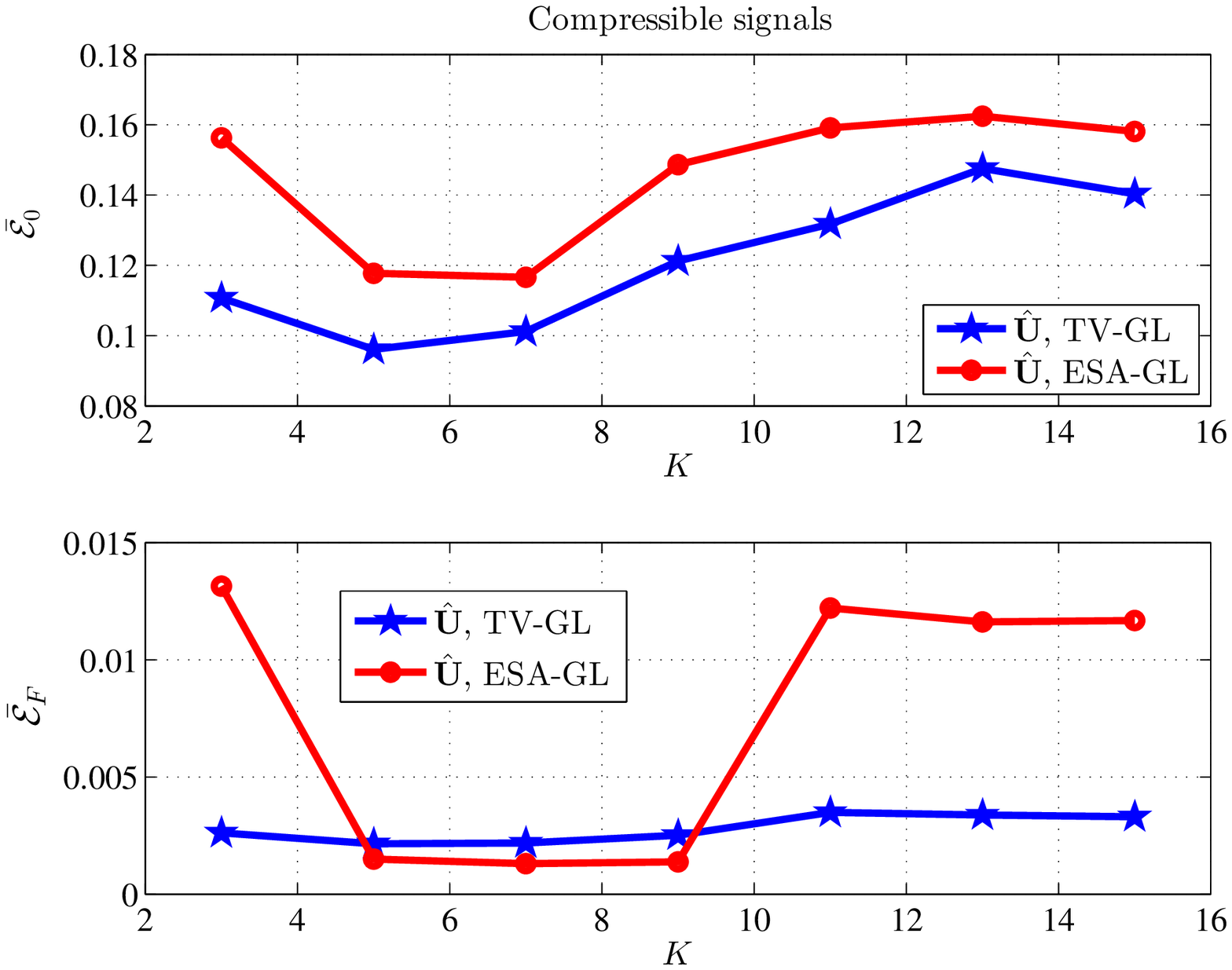}
              \caption{} \label{fig:compressed}
\end{subfigure}
   \caption{ Average recovery errors  versus $K$ for  a) bandlimited  graph signals; b) for  compressible  graph signals.}
\end{figure*}



\noindent\textbf{\textit{Performance over real data}}\\
In this section we test our methods over real data for recovering the brain functional connectivity network associated to epilepsy.
Understanding how this abnormal
neuronal activity emerges and from what epileptogenic zone, would help to refine surgical
 techniques and devise novel therapies.
The diagnosis involves comparing ECoG time series obtained from the patient's brain before and after onset of a seizure.\\

  \noindent \textit{Epilepsy data description}\\
We used the datasets taken from
experiments conducted in an epilepsy study \cite{Kramer} to infer the brain functional activity map. The data were collected from a 39-year-old female subject with a case of intractable epilepsy at the  University of California, San
 Francisco (UCSF) Epilepsy Center \cite{Kramer}. An $8 \times 8$ electrode grid was implanted in the
 cortical surface of the subject's brain  and two accompanying strips of six electrodes each were implanted deeper into the brain.
 This combined electrodes network recorded $76$ ECoG time series, consisting of voltage levels in the vicinity of each electrode,
  which are indicative of the levels of local brain activity. Physicians recorded data for $5$ consecutive days and ECoG epochs containing eight seizures were
  extracted from the record. The time series at each electrode were first passed through a bandpass filter with cut-off frequencies of $1$
and $50$ Hz and two temporal intervals of interest were picked for analysis, namely, the preictal and ictal  intervals.
 The preictal interval is defined as a $10$-second  interval preceding seizure onset, while the ictal interval corresponds to the  $10$-second immediately after the start of a seizure. For further details on data acquisition see  \cite{Kramer}. \\

\noindent \textit{Recovery of brain functional activity graph}\\
Since the observed signal is highly non-stationary, following the same approach described in  \cite{Kramer}, before running our algorithms we divide the $10$ seconds interval into $20$ overlapping segments of $1$ second, so that each segment overlaps with  the previous one by $0.5$ seconds. The reason is that within a one second interval, the brain activity is approximately stationary.
After this segmentation, our goal is to infer the network topology for each segment.
We denote by $\mY \in \mathbb{R}^{N \times M}$  the observed data matrix,  with $N=76$ and $M=400$. Before applying our inference algorithm, we filter the data to reduce the effect of noise, using the method proposed in \cite{Donoho} where an optimal shrinkage of the empirical singular values of the observed matrix is applied.
Given the compressed signal matrix $\hat{\bY}$,  we  proceed by running first Algorithm $1$ to estimate the transform matrix $\hat{\mU}$
and, thereafter, we recover the brain network topology by using the TV-GL  algorithm.
In Fig. \ref{fig:pre_ict_int}  we show two examples of graphs learned from the observed (and filtered) data. They refer, respectively, to the pre-ictal interval $19$ and  ictal interval $1$.
\begin{figure*}
\centering
\begin{subfigure}[b]{0.32\textwidth}
\centering
               \includegraphics[width=\textwidth]{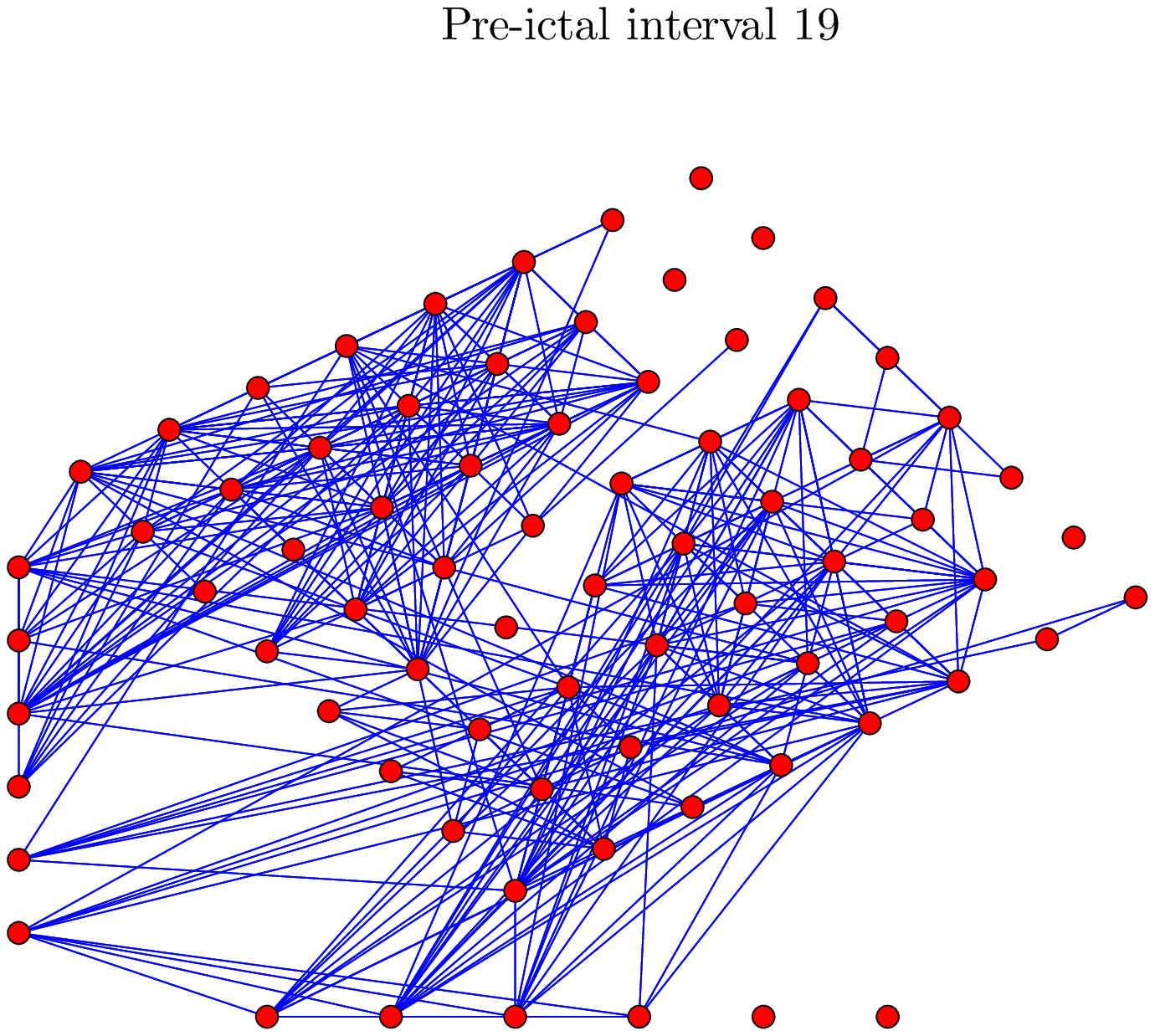}
     \end{subfigure}
     \hspace{1.4cm}
\begin{subfigure}[b]{0.32\textwidth}
\centering
              \includegraphics[width=\textwidth]{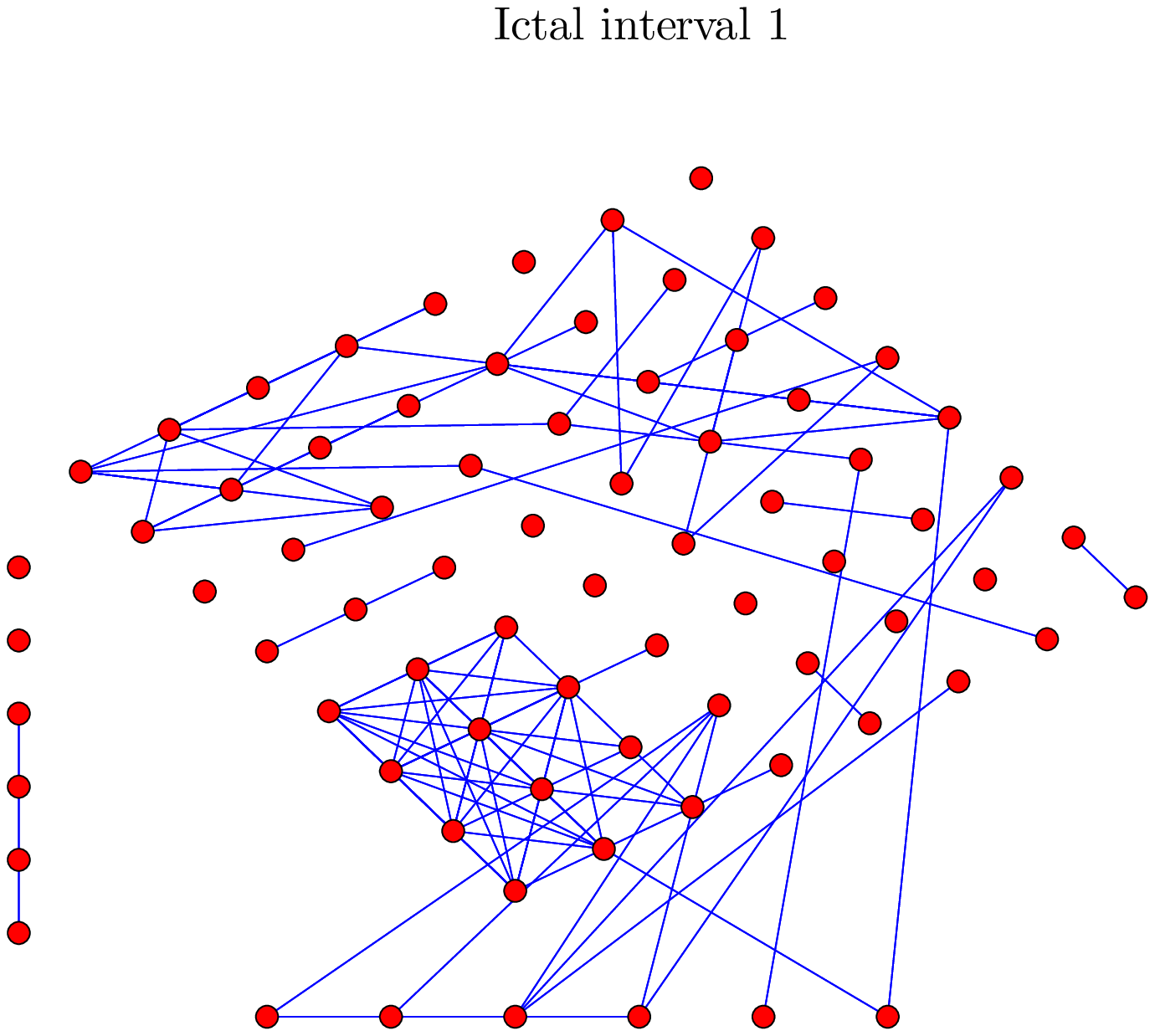}
\end{subfigure}
       \vskip \baselineskip
\caption{Recovered  networks for the pre-ictal interval 19 (left) and the first  ictal interval (right).}\label{fig:pre_ict_int}
\end{figure*}
We can notice how the number of edges changes dramatically, showing that the network connectivity tends to decrease at seizure onset.
This behavior is especially evident in the bottom corner of the grid closest to the two strips of electrodes which were located close to where the seizure was suspected to emanate so that in this region the connectivity of the network tends to decrease at seizure onset, as observed in \cite{Kramer}.


The problem in associating a graph topology to a set of signals is that we do not know the ground truth. To validate our approach from a purely data-oriented perspective, we used the following approach. We consider the observations taken in two intervals, say pre-ictal interval $19$ and ictal interval $1$. The graphs inferred in these two cases are the ones depicted in Fig. \ref{fig:pre_ict_int}. Then, we assume that we observe only the signals taken from a subset of electrodes and we ask ourselves whether we can recover the signals over the unobserved electrodes. If the signals is band-limited over the recovered graph, we can apply sampling theory for graph signals, as e.g. \cite{Tsit_Barb_PDL}, to recover the unobserved signals. Then we compare what we are able to reconstruct with what is indeed known. We use a greedy sampling strategy using an E-optimal design criterion \cite{di2017sampling}, selecting the bandwidth provided by our transform learning method, which is equal to $60$ for the pre-ictal interval, and to $64$ for the ictal interval. The number of electrodes assumed to be observed is chosen equal to the bandwidth in both cases. For each time instant, we use a batch consistent recovery method that reconstructs the signals from the collected samples [see eq. (1.9) in \cite{di2017sampling}]. In Fig. \ref{fig:rec_brain}, we illustrate the true ECoG signal present at node $72$ (black line) and what we are able to reconstruct (green line) from a subset of nodes that does not contain node $72$. We repeated this operation for both the pre-ictal (top) and the ictal (bottom) phases. As we can notice from Fig. \ref{fig:rec_brain}, the reconstructed signal is very close to the real signal, in both phases. This means that, evidently, the inferred graphs are close to reality, because otherwise the sampling theory would have been based on a wrong topology assumption and then it could not lead to such a good prediction.\\
\begin{figure}[h]
\centering
\includegraphics[width=7cm,height=6.5cm]{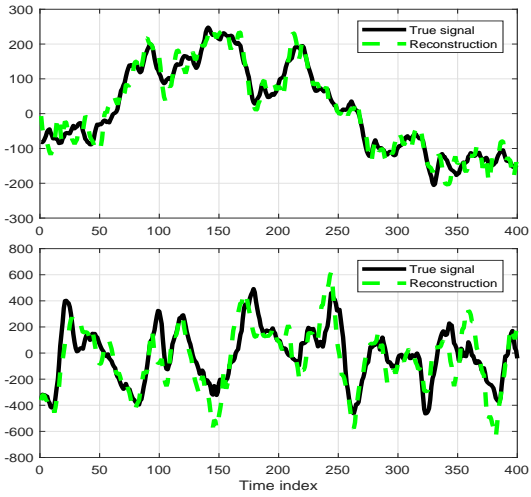}
\caption{True and recovered brain signals during pre-ictal (top) and ictal (bottom) intervals.}\label{fig:rec_brain}
\end{figure}

\textbf{\textit{Comparison with GSP-based topology inference methods.}}
In this section we compare the performance of our algorithms  with  recent GSP-based algorithms, namely
the methods used in \cite{Segarrajournal2016}, \cite{Kalofolias}, \cite{Frossard_jour}  to identify the Laplacian matrix exploiting the   smoothness of graph signals.
To make a fair comparison, we considered  for all methods the combinatorial Laplacian matrix.
To implement the algorithm
SpecTemp proposed in   \cite{Segarrajournal2016}, we need to solve the following problem
\beq \nonumber
\begin{array}{ll}
\underset{\mL,\overline{\mL}_{\bar{\mathcal{K}}}, \mL^{\prime}, \boldsymbol{\lambda}}{\min} \quad  \parallel \mL \parallel_{1} \hspace{1cm}(\mathcal{P}_{s})\\
\hspace{0.5cm} \begin{array}{llll} \mbox{s.t.}
      &\hspace{0.6cm}\mL \in \mathcal{L} \medskip\\
       &\hspace{0.6cm}\mL^{\prime} = \overline{\mL}_{\bar{\mathcal{K}}}+ \sum_{k=1}^{K} \lambda_k \hat{\mv}_k \hat{\mv}_k^T, \; \overline{\mL}_{\bar{\mathcal{K}}} \hat{\mV}_K=\mathbf{0}\medskip\\
       &\hspace{0.6cm} d(\mL^{\prime},\mL)\leq \epsilon
      \end{array}
\end{array}
\eeq
where $\hat{\mV}_K\triangleq [\hat{\mv}_1, \ldots, \hat{\mv}_K]$ is the
graph Fourier basis obtained via PCA, i.e. taking the
$K$ eigenvectors associated to the $K$ largest nonzero eigenvalues  of the sample covariance matrix $\mC_y \triangleq \frac{1}{M} \mY \mY^T$, with $\mY$ assumed to be zero mean; $d(\cdot,\cdot)$ is a convex vector distance function, e.g. $\parallel \mL - \mL^{\prime}\parallel_F$, and $\epsilon$
    is a positive constant controlling the feasibility of the set.
 We select $\epsilon$ as the smallest value that admits a feasible solution.
As suggested in \cite{Segarrajournal2016}, the recovered eigenvalues $\boldsymbol{\lambda}$ are required to satisfy the condition $\lambda_i\geq \lambda_{i+k}+\delta$, $\forall i$, with $k=2$ and $\delta=0.1$. This is done to enforce the property that the principal eigenvectors of the estimated covariance matrix give rise to the eigenvectors associated to the smallest eigenvectors of the Laplacian matrix.\\
We also considered  the combinatorial Laplacian recovering algorithm proposed by Dong et al. in \cite{Frossard_jour}, which solves the following optimization problem
\beq \nonumber
\begin{array}{ll}
\underset{\mL}{\min} \quad \mu \parallel \mL \parallel_{F}^{2}+ \text{tr}( \mY^T \mL \mY) \hspace{1cm}(\mathcal{P}_{D})\\
 \begin{array}{llll} \mbox{s.t.}
      &\hspace{0.2cm}\mL \in \mathcal{L}, \; \text{tr}(\mL)=p
      \end{array}
\end{array}
\eeq
where the coefficient $\mu>0$ is chosen in order to maximize the correlation coefficient $\rho(\mL,\hat{\mL})$.
Finally, we considered
the Kalofolias algorithm proposed in  \cite{Kalofolias}, which finds the adjacency matrix as the optimal solution of the following convex problem
\beq \nonumber
\begin{array}{ll}
\underset{\mA \in \mathcal{A}_N}{\min} \quad  \text{tr}( \mA \mZ)-\alpha \mathbf{1}^T \log(\mA \mathbf{1})+\beta
\parallel  \mA\parallel_F^{2} \hspace{1cm}(\mathcal{P}_{K})
\end{array}
\eeq
where $\alpha, \beta>0$, $\mathcal{A}_N \triangleq \{\mA \in \mathbb{R}_{+}^{N \times N} \; : \; \mA=\mA^T, \text{diag}(\mA)=0, A_{ij}>0\}$ and $\mZ\triangleq (Z_{i j})_{\forall i,j}$
 with $Z_{ij}\triangleq \| \my_i-\my_j\|^2$.
Note that the logarithmic barrier in $\mathcal{P}_{K}$ forces the node degrees to be positive, while the Frobenius norm is added to control sparsity. \\
As figures of merit, we considered the average correlation coefficient $\bar{\rho}(\mL,\hat{\mL})$, the average recovery error $\bar{\mathcal{E}}_0$ and three evaluation criteria commonly used in information retrieval \cite{Manning}, i.e. the F-measure, the edge \mbox{Precision} and the edge \mbox{Recall}. The \mbox{Precision} is the percentage of correct edges in the learned graph and the  \mbox{Recall} evaluates the percentage of the edges in the true graph that are present in the learned graph \cite{Manning}.
Defining $\mathcal{E}_{g}$, $\mathcal{E}_{r}$  as the sets  of edges present, respectively, in the ground truth  and in the recovered graph, Precision and Recall are
\beq
\mbox{Precision}= \frac{\mid \mathcal{E}_{g} \cap \mathcal{E}_{r} \mid}{\mid \mathcal{E}_{r}\mid },\quad\quad
\mbox{Recall}= \frac{\mid \mathcal{E}_{g} \cap \mathcal{E}_{r} \mid}{\mid \mathcal{E}_{g}\mid}.
\eeq
The F-measure is defined as the
harmonic mean of edge \mbox{Precision} and edge \mbox{Recall}.
In our numerical experiments, we average the results
over $100$ random graphs and compare the different methods of interest on three different types of graphs composed of $N=30$ nodes and  for $M=100$ graph signal vectors.
More specifically, we considered three cases: i) random graphs composed of three  clusters, each of $10$ nodes, with edge probability of connection among
clusters equal to $1.e-2$;
ii)   Erd\H{o}s-R\`{e}nyi graphs with probability of connection $0.3$; iii) Bar\'{a}basi-Albert graphs generated  from a set of $m_0=4$ initial nodes and where one node is added at each step and connected to $m=3$  already existing nodes according to the
preferential attachment mechanism. For a fair comparison we generated the signals according to two models. The graph signals are generated as $\by=\mU \bs$,  where  $\bs$ are: i) $\mathcal{K}$-bandlimited random i.i.d. signals,  with zero mean, unit variance and $|\mathcal{K}|=3$ (bandlimted GS model); ii)
zero-mean Gaussian random variables with covariance matrix defined as the pseudo-inverse of the eigenvalue matrix $\mathbf{\Lambda}$ of the graph Laplacian, i.e. $\bs \sim \mathcal{N}(\mathbf{0},\mathbf{\Lambda}^{\dag})$ \cite{Frossard_jour} (inverse Laplacian GS model).
To compute the Precision, Recall and F-measure, we applied a threshold on the estimated edge values, to identify the sets $\mathcal{E}_{g}$ and $\mathcal{E}_{r}$. We used the same threshold for all methods, evaluated as one half the average value of the off-diagonal entries of the estimated Laplacian.
In Table \ref{tab:comperison},  we summarize  the performance measures, respectively, for the bandlimited and the inverse Laplacian graph signal models. To make a fair comparison, we put every method in its best conditions. So,
for SpecTemp \cite{Segarrajournal2016}, we searched for the minimum $\epsilon$  value guaranteeing the set feasibility, while for the Kalofolias \cite{Kalofolias} and Dong et al. \cite{Frossard_jour} algorithms, we selected the optimal $\mu$, $\alpha$ and $\beta$ coefficients achieving  the maximum $\rho(\mathbf{L},\bar{\mathbf{L}})$. Comparing all methods, we notice first that our methods offer superior performance when we observe $\mathcal{K}$-bandlimited signals over random graphs composed of $K$ clusters. Clearly this had to be expected because our methods are perfectly matched to such a condition. However, we can also notice that our methods are quite robust across the different graph and signal models. Additionally, it can be noted that the SpecTemp  algorithm performs poorly when we use a low  number of observed signals $M$ and of  basis vectors $K$,  whereas the TV-GL and ESA-GL methods reach good performance under this critical setting. In Table \ref{tab:comperison2} we report the performance metrics for different random geometric graphs with $K=3,6,9$ clusters, each composed of $10$ nodes and with probability of connection among clusters $1e-2$. Note  that the proposed graph learning methods ensure good  performance for both bandlimited  and inverse Laplacian GS models.
Finally, in Table \ref{tab:comperison3}, under the setting of discrete   alphabet of size $K$, we report the numerical results by using as input of  the TV-GL and ESA-GL methods the de-rotate transform matrix $\hat{\mU}_{\mathcal{K}}\mH^T$ where the matrix $\mH$  is estimated  according to the blind recovering method proposed in \cite{Macchi}.  We named these methods Derotated TV-GL (DTV-GL) and
Derotated ESA-GL (DESA-GL). Interestingly, we can note  that high performance levels are achieved  in both cases, by applying the derotation method  or by omitting  this. Indeed, we observed that
for bandlimited graph signals over random graphs with clusters, as the number of clusters increases, the transform and signals matrices, learned through the TV-GL and ESA-GL methods, are more and more accurate and are not rotated    with respect to the true transform matrix. This is indeed an interesting behavior whose investigation we defer to future works.

\begin{table*}[t]
 \hspace{4.5cm}{\bf{Bandlimited GS model}} \hspace{4.5cm}  \hspace{0.8cm}
 {\bf{Inverse Laplacian GS model}} \newline

 \centering
  \begin{tabular}{l l l l l l l l l l l l}
    \cline{1-6}
      \cline{8-12}

     & \bf{TV-GL} & \bf{ESA-GL} & \bf{SpecTemp} &  \bf{Kalofolias}  & \bf{Dong et al.} & & \bf{TV-GL} & \bf{ESA-GL} & \bf{SpecTemp} &  \bf{Kalofolias}  & \bf{Dong et al.}\\ \hline

     \bf{RG, $K=3$ clusters }  &  &  &  & & &  & & & & & \\

   F-measure  & 0.792 & 0.839 & 0.311 & 0.774 & 0.768  & & 0.688 & 0.650& 0.318 & 0.879 & 0.865\\

  Precision & 0.664 & 0.734 & 0.185  & 0.734  & 0.650 & & 0.540& 0.498 & 0.189 & 0.901  & 0.794\\

  Recall & 0.989 & 0.982 & 0.980  & 0.829   & 0.943 & & 0.955 & 0.946 & 1  & 0.859   & 0.952  \\

  $\bar{\mathcal{\rho}}(\mathbf{L},\hat{\mathbf{L}})$  &  0.947 &  0.953 & 0.886  &  0.905 & 0.934& &  0.960 &  0.937 & 0.892 &  0.955 & 0.974 \\

  $\bar{\mathcal{E}}_0$  & 0.097 & 0.071 & 0.801 & 0.090 & 0.107 & & 0.164 & 0.194 & 0.810 & 0.044 & 0.056\\


  \bf{Erd\H{o}s-R\`{e}nyi}  &  &  &  &  & & & & & &\\

   F-measure  & 0.433 &0.480   & 0.237& 0.444 & 0.401 & & 0.545  & 0.448   & 0.237 & 0.633 & 0.712 \\

  Precision & 0.280 &  0.323 &  0.134 &  0.331 & 0.255 & & 0.383 &  0.308 &   0.134 & 0.501 & 0.577\\

  Recall & 0.961& 0.943  &  1 & 0.700 & 0.954 & & 0.946& 0.826 &  1 & 0.863 & 0.932\\
   $\bar{\mathcal{\rho}}(\mathbf{L},\hat{\mathbf{L}})$  & 0.889  & 0.896  & 0.844  &  0.848 & 0.879 & & 0.949 &  0.894 & 0.844 & 0.932  & 0.957\\

  $\bar{\mathcal{E}}_0$  & 0.339 &  0.275&0.865 & 0.243 & 0.392& &0.212 & 0.273 & 0.865 & 0.136 & 0.102\\


  \bf{Barab\'{a}si-Albert}  &  &  &  & &  & & & & & &\\

       F-measure  &  0.436    & 0.462 & 0.301 & 0.464 &  0.448  &  & 0.549 & 0.433& 0.304 & 0.652 & 0.670\\

         Precision & 0.287  &  0.313& 0.178 &  0.364 &  0.294 & & 0.440& 0.311& 0.179 &  0.579 & 0.662\\
         Recall & 0.914 &0.886  &   0.990 & 0.653 & 0.936 &  & 0.731& 0.718&  1  & 0.750 &  0.680\\
           $\bar{\mathcal{\rho}}(\mathbf{L},\hat{\mathbf{L}})$  &  0.792 & 0.795  & 0.768 &  0.784 &0.790  &  &  0.920& 0.798& 0.773 &  0.924 & 0.915\\

  $\bar{\mathcal{E}}_0$  & 0.423 &0.370  & 0.815 & 0.272  & 0.415 & & 0.215 & 0.336 &0.820 & 0.143  & 0.120\\

  \hline
    \end{tabular}
 \caption{Performance comparison between  TV-GL, ESA-GL, SpecTemp \cite{Segarrajournal2016}, Kalofolias \cite{Kalofolias} and Dong et al. \cite{Frossard_jour} for different graph signal models, with $M=100$. For TV-GL and ESA-GL we set $K=3$.}\label{tab:comperison}
\end{table*}

\begin{table*}[t]
 \hspace{4.5cm}{\bf{Bandlimited GS model}} \hspace{4.5cm}  \hspace{0.8cm}
 {\bf{Inverse Laplacian GS model}} \newline

 \centering
  \begin{tabular}{l l l l l l l l l l l l}
    \cline{1-6}
      \cline{8-12}

     & \bf{TV-GL} & \bf{ESA-GL} & \bf{SpecTemp} &  \bf{Kalofolias}  & \bf{Dong et al.} & & \bf{TV-GL} & \bf{ESA-GL} & \bf{SpecTemp} &  \bf{Kalofolias}  & \bf{Dong et al.}\\ \hline

     \bf{RG, $K=3$ clusters }  &  &  &  & & &  & & & & & \\

   F-measure  & 0.792 & 0.839 & 0.311 & 0.774 & 0.768  & & 0.688 & 0.650& 0.318 & 0.879 & 0.865\\

  Precision & 0.664 & 0.734 & 0.185  & 0.734  & 0.650 & & 0.540& 0.498 & 0.189 & 0.901  & 0.794\\

  Recall & 0.989 & 0.982 & 0.980  & 0.829   & 0.943 & & 0.955 & 0.946 & 1  & 0.859   & 0.952  \\

  $\bar{\mathcal{\rho}}(\mathbf{L},\hat{\mathbf{L}})$  &  0.947 &  0.953 & 0.886  &  0.905 & 0.934& &  0.960 &  0.937 & 0.892 &  0.955 & 0.974 \\

  $\bar{\mathcal{E}}_0$  & 0.097 & 0.071 & 0.801 & 0.090 & 0.107 & & 0.164 & 0.194 & 0.810 & 0.044 & 0.056\\

  \bf{RG, $K=6$ clusters}  &  &  &  &  & & & & & &\\

   F-measure  & 0.712  & 0.776   &0.177 & 0.752 &  0.713& & 0.661  &0.705    & 0.175 & 0.742 & 0.762  \\

  Precision & 0.557 &  0.639 &  0.097 & 0.671  & 0.586 & & 0.500 & 0.556  & 0.097   & 0.628 & 0.659 \\

  Recall &0.993 & 0.992  & 1 & 0.863 &0.917 & & 0.981& 0.967 & 0.980  & 0.913 & 0.907 \\
   $\bar{\mathcal{\rho}}(\mathbf{L},\hat{\mathbf{L}})$  & 0.953  &  0.962 &   0.886&  0.927 & 0.931 & & 0.951 & 0.955 & 0.873 & 0.939  & 0.937 \\

  $\bar{\mathcal{E}}_0$  & 0.078 & 0.056 & 0.902& 0.055 & 0.072& & 0.098& 0.079 & 0.886 & 0.062 & 0.055\\

  \bf{RG, $K=9$ clusters }  &  &  &  & &  & & & & & &\\

   F-measure  &  0.743&  0.875 & 0.087& 0.691 & 0.747 & & 0.617  & 0.688   &0.037  &0.750  & 0.749 \\

  Precision & 0.593 & 0.781  & 0.049  & 0.566  & 0.658 & & 0.451 & 0.536  & 0.139   &0.664  & 0.660\\

  Recall & 0.997& 0.999  & 0.376 & 0.898 & 0.867& & 0.980& 0.965 & 0.120  & 0.862 & 0.867\\
   $\bar{\mathcal{\rho}}(\mathbf{L},\hat{\mathbf{L}})$  &  0.966 & 0.980  & 0.785  & 0.942  & 0.939 & & 0.957 & 0.962 & 0.462 & 0.942  & 0.939\\

  $\bar{\mathcal{E}}_0$  & 0.047 & 0.019 & 0.532& 0.055& 0.039& & 0.083&  0.060& 0.320 & 0.039 & 0.039\\

  \hline
    \end{tabular}
 \caption{Performance comparison between  TV-GL, ESA-GL, SpecTemp \cite{Segarrajournal2016}, Kalofolias \cite{Kalofolias} and Dong et al. \cite{Frossard_jour} for random geometric graph with   $K=3,6,9$ clusters  and  $M=100$.}\label{tab:comperison2}
\end{table*}

\begin{table*}[t]
 \hspace{6.5cm}{\bf{Bandlimited, GS with discrete alphabet}} \newline

 \centering
  \begin{tabular}{l l l l l l l l }
    \cline{1-8}

     & \bf{TV-GL} & \bf{ESA-GL} & \bf{DTV-GL} & \bf{DESA-GL} & \bf{SpecTemp} &  \bf{Kalofolias} & \bf{Dong et al.} \\ \hline

     \bf{RG, $K=2$, $M=120$}  &  &  &  &  &  &  &\\

   F-measure  &  0.795& 0.827 & 0.821 & 0.811 & 0.449 & 0.711 & 0.762\\

  Precision & 0.669 &  0.726& 0.715 & 0.693 & 0.291 & 0.560 & 0.633\\

  Recall & 0.987 &0.966  & 0.971 &  0.983& 1 & 0.995 & 0.968\\

  $\bar{\mathcal{\rho}}(\mathbf{L},\hat{\mathbf{L}})$  & 0.945 & 0.947 & 0.943 & 0.948 & 0.902 & 0.927 &0.934\\

  $\bar{\mathcal{E}}_0$  &  0.146&  0.116& 0.121 & 0.131 & 0.708 & 0.237 & 0.174\\







  \bf{RG, $K=6$, $M=1000$ }  &  &  &  & &  & & \\

   F-measure  & 0.809  & 0.840  & 0.779&  0.839& 0.151 & 0.745 & 0.746 \\

  Precision &  0.685&  0.728 & 0.641  & 0.726  & 0.090 & 0.634& 0.634  \\

  Recall & 0.992& 0.997  & 0.997 & 0.997 &0.479 & 0.909& 0.913 \\
   $\bar{\mathcal{\rho}}(\mathbf{L},\hat{\mathbf{L}})$  &  0.957 & 0.970  & 0.959  &  0.968 & 0.804 & 0.939&   0.936 \\

  $\bar{\mathcal{E}}_0$  &0.045  & 0.036 & 0.055& 0.037&0.520 & 0.061& 0.060 \\


  \hline
    \end{tabular}
 \caption{Performance comparison between  TV-GL, ESA-GL, DTV-GL, DESA-GL, SpecTemp \cite{Segarrajournal2016}, Kalofolias \cite{Kalofolias} and Dong et al. \cite{Frossard_jour} for bandlimited, GS with discrete alphabet, over RG with   $K=2,6$ clusters.}\label{tab:comperison3}
\end{table*}

\section{Conclusions}
In this paper we have proposed  efficient strategies for recovering the graph topology from the observed data. The main idea is to associate a graph to the data in such a way that the observed signal looks like a band-limited signal over the inferred graph. The motivation is that enforcing the observed signal to be band-limited signal over the inferred graph tends to provide modular graphs and modularity is a structural property that carries important information.  The proposed method consists of two steps: learn first the sparsifying  transform and the graph signals jointly from the observed signals, and then infer the graph Laplacian from the estimated orthonormal bases.
Although the graph topology inference is intrinsically non-convex and we cannot make any claim of optimality on the achieved solution, the proposed algorithms are computationally efficient, because  the first step alternates between intermediate solutions expressed in closed form, while the second step involves a convex optimization.
We applied our methods to recover the brain functional activity network from ECoG seizure data, collected in an epilepsy study, and we have devised a purely data-driven strategy to assess the goodness of the inferred graph. Finally, we have compared our methods with existing ones under different settings and conditions to assess the relative merits.




\vspace{-0.28cm}
\bibliographystyle{IEEEbib}
\bibliography{sardel_refs_old}
\vspace{-0.35cm}

\appendices{}
\section{Closed-form solution for problem $\tilde{\mathcal{U}}_{k}$}
\label{A:closed_form_U} 
The proof is conceptually similar to that given in \cite{Manton},\cite{Bresler15},
for unitary transform, with the only difference that in our case one eigenvector is known a priori.
Hence,  given the sparse data matrix $\mS^{k} \in \mathbb{R}^{N \times M}$ and the observations matrix $\mY \in \mathbb{R}^{N \times M}$,   we derive closed form solution  for the optimization problem $\mathcal{U}_k$.
Note that the objective function is equivalent to
\beq \nonumber
\begin{split}
\parallel \mU^T \mY -{\mS}^{k}\parallel_{F}^{2} =\text{tr}\left( \mU^T \mY \mY^T \mU +\mS^{k} (\mS^{k})^T\!\!-
 2 \, \mU^T \mY (\mS^{k})^T\right)
\end{split}
\eeq
and using the orthonormality property $\mU^T \mU=\mI$,
problem $\mathcal{U}_k$  becomes
\beq
\begin{array}{lll}
\underset{\mathbf{U} \in \mathbb{R}^{N \times N}}{\max} \quad \text{tr}\left( \mU^T \mY (\mS^{k})^T\right) \quad \quad \quad (\mathcal{Q}_k)\\
\begin{array}{lll} \hspace{0.3cm}\mbox{s.t.}
      & \quad \, \mU^T \mU=\mI, \quad \bu_1=b \mathbf{1}\,.
    \end{array}
\end{array}
\eeq
Defining $\bar{\mY}^{k}= \mY (\mS^{k})^T$, it holds
\beq
\text{tr}\left( \mU^T \bar{\mY}^{k}\right)=\ds \sum_{i=1}^{N} \bu_i^T \bar{\my}_i^{k}=b \mathbf{1}^T \bar{\my}_1^{k}+ \ds \sum_{i=2}^{N} \bu_i^T \bar{\my}_i^{k}
\eeq
where $\bar{\my}_i^{k}$ represents the $i$th column of  $\bar{\mY}^{k}$.
Therefore, by introducing the matrices $ \bar{\mU}=[\bu_2,\ldots,\bu_N]$ and ${\mZ}^{k}=[\bar{\my}_2^{k},\ldots,\bar{\my}_N^{k}]$, problem $\mathcal{Q}_k$
is equivalent to the following non-convex problem
\beq
\begin{array}{lll}
\underset{\bar{\mathbf{U}} \in \mathbb{R}^{N \times (N-1)}}{\max} \quad \text{tr}\left( \bar{\mU}^T \mZ^{k} \right) \quad \quad \quad (\bar{\mathcal{Q}}_k)\\
\begin{array}{lll} \hspace{0.3cm} \mbox{s.t.}
      & \quad \quad \; \bar{\mU}^T \bar{\mU}=\mI_{N-1}, \quad \mathbf{1}^T\bar{\mU}=\mathbf{0}^T\,.
    \end{array}
\end{array} \label{prob_Z}
\eeq
The Lagrangian function associated to $\bar{\mathcal{Q}}_k$  can be written as
\beq
\begin{split} \nonumber
\mathcal{L}(\bar{\mU})=\text{tr}\left( \bar{\mU}^T \mZ^{k} \right) & -\text{tr}\left( \bar{\mLambda}(\bar{\mU}^T \bar{\mU}-\mI_{N-1})\right)  +\mathbf{1}^T\bar{\mU} \bar{\bmu}
\end{split}
\eeq
where $\bar{\mLambda} \in \mathbb{R}^{(N-1) \times (N-1)}$ and $\bar{\bmu} \in \mathbb{R}^{N-1 \times 1}$ contain the Lagrangian multipliers associated to the constraints.
Then, the KKT necessary conditions for the solutions optimality are
\beq
\begin{array}{lll}
\text{a)} \; \nabla_{\bar{\mU}}\mathcal{L}(\bar{\mU})=\mZ^{k}- \bar{\mU}(\bar{\mLambda}+\bar{\mLambda}^T) +  \mathbf{1}\bar{\bmu}^T=\mathbf{0}_{N \times N-1}\medskip\\
\text{b)} \; \bar{\mLambda} \; \bot \;(\bar{\mU}^T \bar{\mU}-\mI_{N-1})=\mathbf{0}\medskip\\
\text{c)}  \; \bar{\bmu} \; \bot \;\mathbf{1}^T\bar{\mU} =\mathbf{0}^T
\end{array}
\eeq
where $\mA  \bot  \mB$ stands for $\langle \mA , \mB \rangle \triangleq \text{tr}(\mA \mB^T)$.
From $\text{a)}$, assuming w.l.o.g. $\bar{\mLambda}=\bar{\mLambda}^T$, one gets
\beq \label{Qt}
2 \bar{\mU} \bar{\mLambda}=\mZ^{k}+\mathbf{1}\bar{\bmu}^T
\eeq
which, after multiplying both sides by $\mathbf{1}^T$ and using $\text{c)}$, gives  $\mathbf{1}^T(\mZ^{k}+\mathbf{1}\bar{\bmu}^T)=\mathbf{0}^T$
or
$ \bar{\bmu}^T=- \mathbf{1}^T\mZ^{k}/N$.
Plugging this last equality in (\ref{Qt}), we have
\beq \label{Qt1}
2 \bar{\mU}\bar{\mLambda}= \mP \mZ^{k}
\eeq
where $\mP=\mI-\mathbf{1} \mathbf{1}^T/N$.
Then, from  (\ref{Qt1}), it holds
\beq \label{Qt1n}
4 \bar{\mLambda}\bar{\mU}^T\bar{\mU} \mLambda=(\mZ^{k})^T \mP \mZ^{k} \Rightarrow \bar{\mLambda}=((\mZ^{k})^T \mP \mZ^{k})^{1/2}/2,
\eeq
where the last equality follows from $\text{b)}$.
Hence, replacing (\ref{Qt1n}) in (\ref{Qt1}), we get
\beq \label{Qt2}
\bar{\mU}((\mZ^{k})^T \mP \mZ^{k})^{1/2}=\mP \mZ^{k}.
\eeq
Observe that  $\mZ_{\perp}^{k}=\mP \mZ^{k}$ is the orthogonal projection of $\mZ^{k}$ onto the orthogonal
complement of the one-dimensional space
$\text{span}\{\mathbf{1}\}$,
so we yield
\beq \label{Qt3}
\bar{\mU}((\mZ^{k})^T_{\perp} \mZ_{\perp}^{k})^{1/2}=\mZ_{\perp}^{k}.
\eeq
Let denote with  ${\mZ}_{\perp}^{k}=\mX \mSigma \mV^T$ the  svd decomposition of ${\mZ}_{\perp}^{k}$ where $\mX \in \mathbb{R}^{N \times N}$, $\mV \in \mathbb{R}^{(N-1) \times (N-1)}$ and $\mSigma$ is a diagonal rectangular $N \times (N-1)$ matrix.
More specifically, if $r=\text{rank}({\mZ}_{\perp}^{k})$ we can rewrite ${\mZ}_{\perp}^{k}$ as
\beq \label{Zp}
{\mZ}_{\perp}^{k}=\mX \mSigma \mV^T=[\mX_r \, \mX_s]  \mSigma \mV^T
\eeq
where the $N\times r$ matrix $\mX_r$ contains as columns  the  $r$
left-eigenvectors associated to the non-zeros singular values of $\mZ_{\perp}^{k}$, $\mX_s$ is a  $N\times N-r$ matrix selected  as  belonging
to the nullspace of the matrix $\mB=[ \mathbf{1}^T; \mX_r^T]$, i.e. $\mB \mX_s=\mathbf{0}$.
This choice meets the orthogonality condition $\mathbf{1}^T \mX=\mathbf{0}^T$ with $\mX^T \mX=\mI_{N}$.
Therefore, by using ${\mZ}_{\perp}^{k}=\mX \mSigma \mV^T$  in  (\ref{Qt3}), we have
\beq
\label{Qt4}
\bar{\mU}(\mV \mSigma^T \mSigma \mV^T)^{1/2}=\mX \mSigma \mV^T
\eeq
or
$\bar{\mU}(\mV (\mSigma^T \mSigma)^{1/2} \mV^T)=\mX \mSigma \mV^T$, then
\beq
\bar{\mU} \mV (\mSigma^T \mSigma)^{1/2} =\mX \mSigma.
\eeq
Being $\mSigma$ a $N\times N-1$ rectangular diagonal matrix, it holds
\beq
\mX \mSigma=\mX^{-} \mSigma_{-}
\eeq
where $\mX^{-}$, $\mSigma_{-}$  are the matrices obtained by removing, respectively,  from $\mX$ its last column and from $\mSigma$ the last all zero row.
Hence, it holds
\beq
\label{Qt5}
\bar{\mU} \mV (\mSigma^T \mSigma)^{1/2} =\mX^{-} \mSigma_{-}\quad \Rightarrow \quad \bar{\mU} \mV \mSigma_{-}=\mX^{-} \mSigma_{-}
\eeq
then the optimal solution is
\beq
\bar{\mU}^{\star}=\mX^{-} \mV^T
\eeq
and
\beq
\mU^{\star}=[ \mathbf{1}b \quad \bar{\mU}^{\star}].
\eeq
Let us now  prove that $\bar{\mU}^{\star}$ is a global maximum for problem $\bar{\mathcal{Q}}_k$.
First observe that from the orthogonality condition $\bar{\mU}^T \mathbf{1}=\mathbf{0}$ we have
\beq \label{tr3}
\text{tr}\left( \bar{\mU}^T \mZ^{k} \right)=\text{tr}\left( \bar{\mU}^T \mZ_{\perp}^{k} \right)=\text{tr}\left( \mV^T \bar{\mU}^T \mX \mSigma \right).
\eeq
Defining  $\mQ=\mV^T \bar{\mU}^T \mX$,
it holds
\beq
\begin{split}
\mQ \mQ^T=& \mV^T \bar{\mU}^T \mX \mX^T \bar{\mU} \mV=\mI_{N-1}.
\end{split}
\eeq
From (\ref{tr3}) and using $\mQ^{\star}=\mV^T {\bar{\mU}^{\star \; T}} \mX= \mI_{N-1,N}$, with $\mI_{N-1,N}=[\mI_{N-1} \, \mathbf{0}]$, we have to prove that
\beq \label{ineq_opt}
\text{tr}\left( \mQ \mSigma \right)\leq \text{tr}\left( \mI_{N-1,N} \mSigma \right),\quad \forall \, \mQ \,: \,   \mQ \mQ^T=\mI_{N-1}
\eeq
with equality if and only if $\mQ^{\star}=\mI_{N-1,N}$.
The inequality  in (\ref{ineq_opt}) holds because $\Sigma_{ii}\geq 0$ and from $\mQ \mQ^T=\mI_{N-1}$ we yield $Q_{ii}\leq |Q_{ii}|\leq 1$ \cite{Manton}.
On the other hand $Q_{ii}=1$ $\forall i$ if and only if $\mQ=\mI_{N-1,N}$  so that the maximum is achieved in
$\mQ^{\star}$.

\vspace{-0.3cm}

\section{Derivation of the compact form in  (\ref{eq_matrix_system})}
\label{B:closed_form_X} 
In this section we show as equations a)-c) and f) in system (\ref{eq_system}) can be reduced to the matrix form in  (\ref{eq_matrix_system}).
 Note that, from conditions $\text{c)}$,  the vectorization decomposition of
  $\mL$ reads
 \beq
 \text{vec}(\mL)=\mM \, \text{vech}(\mL) \triangleq \mM \, \mz
 \eeq
 where $\text{vec}(\mL)$  is the $N^2$-dimensional column vector obtained by stacking the columns of the matrix $\mL$ on top of one another;
  $\mz \triangleq \text{vech}(\mL) \in \mathbb{R}^{N (N-1)/2}_{-}$ with $\text{vech}(\mL)$ the half-vectorization of $\mL$ obtained
  by  vectorizing only the lower triangular part (except the diagonal entries)  of $\mL$.  To define the  duplication matrix
   $\mM \in \mathbb{R}^{N^2 \times N(N-1)/2}$ which meets conditions $b)-f)$,  we first introduce   the  matrix $\overline{\mM}$, that  can be described in terms of its rows or columns as detailed next: for
  $i\geq j$,
   the $[(j-1)N+i]$th  and the $[(i-1)N+j]$th  rows of $\overline{\mM}$ equal the
   $[(j-1)(2 N-j)/2+i]$th
   row of $\mI_{N(N+1)/2}$.
       Then,   $\mM$  can be derived from $\overline{\mM}$ as follows: i) first remove from $\overline{\mM}$ the columns of index $(k-1)N+k-\sum_{l=1}^{k-1} l$ for $k=1,\ldots, N$ by defining the new matrix $\mM$; ii) replace in $\mM$ the rows of  index $k+N(k-1)$ with the vector $\mv$  derived by summing up the rows of the
      matrix $\mD_k=-\mM(1+N (k-1):N k,:)$ for $k=1,\ldots,N$.
   As an example, if we set $N=3$, the matrix $\mM$ reads as
      \beq
      \mM=\left(
            \begin{array}{rrrrrrrrr}
            -1 & 1 & 0 &1 & -1 & 0 & 0 & 0 &0 \\
            -1 & 0 & 1 & 0 & 0 & 0 & 1 & 0 & -1\\
            0 & 0 & 0 & 0 & -1 & 1 & 0 & 1 & -1
            \end{array}
          \right)^T.
      \eeq
Therefore, the first equation in (\ref{eq_system}) can be vectorized as $\text{vec}(\mL \mU_{\mathcal{K}})-\text{vec}(\mU_{\mathcal{K}} \mLambda_{\mathcal{K}})=\mathbf{0}$. We can now exploit the property of the vec operator, namely $\text{vec}(\mA\mB)=(\mI \otimes \mA)\text{vec}(\mB)=(\mB^T \otimes \mI)\text{vec}(\mA)$ and define to that purpose the matrices $\mB\in \mathbb{R}^{K N \times N (N-1)/2}$ and $\mQ \in \mathbb{R}^{K N \times K}$ as
\beq \nonumber
\mB=(\mU_{\mathcal{K}}^T \otimes \mI_N) \mM,\quad \quad \mQ=(\mI_K \otimes \mU_{\mathcal{K}}) \left( \sum_{k=1}^K \me_k \otimes \mE_k\right)
\eeq
where $\me_k$ denotes the canonical $K$-dimensional vector with the $k$-entry equal to $1$
and $\mE_k= \me_k \cdot \me_k^T$. Note that the matrix $\mQ$ is full-column rank (with rank $K$)
since each column $k$ is $[\mathbf{0}; \ldots;\underset{k-th}{\underbrace{\mathbf{u}_k}};\mathbf{0}]$.
Let us introduce the matrix $\mQ_{-}$ obtained  by removing from $\mQ$ the column with index corresponding to $\lambda_{\mathcal{K},1}$.
 Hence, one can easily see that part of  system (\ref{eq_system}) reduces to the following form
\beq \label{eq_matrix_system1}
 \begin{array}{lllll} \mF \mx=\mb, \quad \quad \mx \in \mathbb{R}^{ N (N-1)/2+ K-1}_{+}
\end{array}
\eeq
where we defined  the coefficient matrix $\mF \in \mathbb{R}^{m\times n}$,
with $m=K N+1$, $n=N (N-1)/2+ K-1$,  as
\beq \label{eq:F_matrix}
\mF \triangleq \left[\begin{array}{ll} -\mB & -\mQ_{-}\\ \mathbf{1}_{ N(N-1)/2}^T & \mathbf{0}^T_{K-1} \end{array}\right],
\eeq
$\mx\triangleq [-\mz ; \bar{\boldmath{\mlambda}}]$; $\bar{\boldmath{\mlambda}}\triangleq \{\lambda_{\mathcal{K},i}\}_{\forall i \in \mathcal{K}^{-}}$
where, assuming the entries of $\boldmath{\mlambda}_{\mathcal{K}}$ in increasing order,  the index set $\mathcal{K}^{-}$ is obtained by removing from $\mathcal{K}$ the first index corresponding to $\lambda_{\mathcal{K},1}$; and, finally, $\mb=[\mathbf{0}_{K N};\, p]$.

\vspace{-0.2cm}
 \section{}

\subsection{Proof of Proposition $2$}
Assume that the set is feasible, i.e. there exists at least a point $\mx\in \mathcal{X}(\mU_{\mathcal{K}})$.
The solution set size depends on the rank $q$ of the coefficient matrix $\mF$ and of the augmented matrix $[ \mF, \, \mb]$.  Let us distinguish the two cases $m\leq n$, which leads to $K \leq \frac{N}{2}-\frac{2}{N-1}$, and $m>n$, or $K > \frac{N}{2}-\frac{2}{N-1}$.
Then,  necessary conditions for which the system  (\ref{eq_matrix_system}) admits at least a solution are:
$q=\text{rank}(\mF)=\text{rank}([\mF, \, \mb])\leq m=K N+1$ for $K \leq \frac{N}{2}-\frac{2}{N-1}$; $q=\text{rank}(\mF)=\text{rank}([\mF, \, \mb])\leq n$ for $K > \frac{N}{2}-\frac{2}{N-1}$. Additionally, since the rank of $\mQ_{-}$ is  $K-1$, it holds $K-1 \leq \text{rank}(\mF)$. This proves statement a).
To show b) note that, from the feasibility of $\mathcal{X}(\mU_\mathcal{K})$, the condition $q=m=n$ or $K=\frac{N}{2}-\frac{2}{N-1}$ is  sufficient in order for the set to be   a singleton. Additionally, if $K >\frac{N}{2}-\frac{2}{N-1}$ and $\text{rank}(\mF)=n=N (N-1)/2+ K-1$,
 the set is a singleton.

\subsection{Proof of Proposition $3$}
Assume  the set $\mathcal{X}(\mU_\mathcal{K})$ with $m<n$  feasible and  $\{\mx \, | \, \mF \mx=\mF \mx_0,  \mx \geq \mathbf{0}\}$  a singleton  for any nonnegative $s$-sparse signal $\mx_0$.
This implies $\{\mx \, | \, \mF \mx=\mF \mx_0, \mx \geq \mathbf{0}\}=\mx_0$ with $\mx_0$ a non negative $s$-sparse vector.
Then, from Proposition $1$ in \cite{Wang} if $\mF \in \mathbb{R}^{m\times n}$ with $m<n$,  or $\frac{N}{2}-\frac{2}{N-1} > K$,
it holds  that $m\geq 2 s+1$, i.e. $K\geq 2 s/N$.  Therefore we get  $\frac{N}{2}-\frac{2}{N-1} > K\geq 2 s/N$.  Additionally, from the feasibility of $\mathcal{X}(\mU_{\mathcal{K}})$, if
   $c$ denotes the multiplicity of the eigenvalue $0$ of  $\mL$, i.e. the number of connected components of the graph,
 it results  $\parallel \!\bar{\boldmath{\mlambda}} \parallel_0=K-c$,
so that  $s=K-c+\parallel \mA \parallel_0/2$. This implies from $m\geq 2 s+1$ that $K N\geq 2 K-2 c+ \parallel \mA \parallel_0$ or
$\parallel \mA \parallel_0\leq K(N-2)+2 c$.

\end{document}